\def\BibTeX{{\rm B\kern-.05em{\sc i\kern-.025em b}\kern-.08em
    T\kern-.1667em\lower.7ex\hbox{E}\kern-.125emX}}
\crefname{figure}{fig}{figures}
\Crefname{figure}{Fig.}{Figures}
\newtheorem{definition}{Definition}
\newtheorem{theorem}{Theorem}
\newtheorem{proof}{Proof}
\newtheorem{property}{Property}
\begin{document}
\title{Differentially Private Multimodal Laplacian Dropout (DP-MLD) for EEG Representative Learning}
\author{Xiaowen FU, Bingxin WANG, Xinzhou GUO, Guoqing LIU, Yang XIANG
\thanks{This work is supported by the Project of Hetao Shenzhen-HKUST Innovation Cooperation Zone HZQBKCZYB2020083, and the Project P2070 at HKUST Shenzhen Research Institute, and Shenzhen Science and Technology Program (No. KQTD20180411143338837).  (Corresponding Authors: Guoqing LIU, Yang XIANG).}
\thanks{Xiaowen FU is with Department of Mathematics, The Hong Kong University of Science and Technology, Clear Water Bay, Hong Kong SAR, China (e-mail: xfuak@connect.ust.hk).}
\thanks{Bingxin WANG is with Department of Mathematics, The Hong Kong University of Science and Technology, Clear Water Bay, Hong Kong SAR, China (e-mail: bwangbo@connect.ust.hk).}
\thanks{Xinzhou GUO is with Department of Mathematics, The Hong Kong University of Science and Technology, Clear Water Bay, Hong Kong SAR, China (e-mail: xinzhoug@ust.hk).}
\thanks{Guoqing LIU is with Shenzhen Youjia Innov Tech Co., Ltd., Shenzhen, China (e-mail: guoqing@minieye.cc).}
\thanks{Yang XIANG is with Department of Mathematics, The Hong Kong University of Science and Technology, Clear Water Bay, Hong Kong SAR, China, and Algorithms of Machine Learning and Autonomous Driving Research Lab, HKUST Shenzhen-Hong Kong Collaborative Innovation Research Institute, Futian, Shenzhen, China (e-mail: maxiang@ust.hk).}}

\maketitle

\begin{abstract}
Recently, multimodal electroencephalogram (EEG) learning has shown great promise in disease detection. At the same time, ensuring privacy in clinical studies has become increasingly crucial due to legal and ethical concerns. One widely adopted scheme for privacy protection is differential privacy (DP) because of its clear interpretation and ease of implementation. Although numerous methods have been proposed under DP, it has not been extensively studied for multimodal EEG data due to the complexities of models and signal data considered there. In this paper, we propose a novel Differentially Private Multimodal Laplacian Dropout (DP-MLD) scheme for multimodal EEG learning. Our approach proposes a novel multimodal representative learning model that processes EEG data by language models as text and other modal data by vision transformers as images, incorporating well-designed cross-attention mechanisms to effectively extract and integrate cross-modal features. To achieve DP, we design a novel adaptive feature-level Laplacian dropout scheme, where randomness allocation and performance are dynamically optimized within given privacy budgets. In the experiment on an open-source multimodal dataset of Freezing of Gait (FoG) in Parkinson's Disease (PD), our proposed method demonstrates an approximate 4\% improvement in classification accuracy, and achieves state-of-the-art performance in multimodal EEG learning under DP.
\end{abstract}

\begin{IEEEkeywords}
Electroencephalogram (EEG), multimodal deep learning, differential privacy (DP), Laplacian dropout.
\end{IEEEkeywords}

\section{Introduction}

Electroencephalography (EEG) is a novel tool for real-time brain activity monitoring \cite{roy2019deep} which bears significant advantages in portability, non-invasiveness, and high temporal resolution \cite{biasiucci2019electroencephalography}. In particular, EEG is often used in clinical settings in diagnosing and monitoring various neurological conditions, including the detection of abnormal sleep patterns \cite{aboalayon2016sleep}, the identification and management of epilepsy \cite{acharya2013automated}, the assessment of attention deficit hyperactivity disorder (ADHD) \cite{arns2013decade}, and moreover, the evaluation of Parkinson’s disease (PD) \cite{maitin2022survey}. Despite the wide applications of EEG data, the analysis of it might suffer from issues induced by low signal-to-noise ratios and non-stationarity. To address these issues, deep learning techniques have shown considerable promise \cite{bigdely2015prep, jas2017autoreject, cole2019cycle, gramfort2013time}. Besides, advances in multimodal data acquisition technologies enable comprehensive collections of various physiological signals, thereby improving model accuracy by integrating richer data inputs \cite{giannakos2019multimodal}, and multimodal deep learning has demonstrated significant potentials in FoG detection with EEG data \cite{skaramagkas2023multi}.

\begin{figure*}[!t]
\centerline{\includegraphics[width=2\columnwidth]{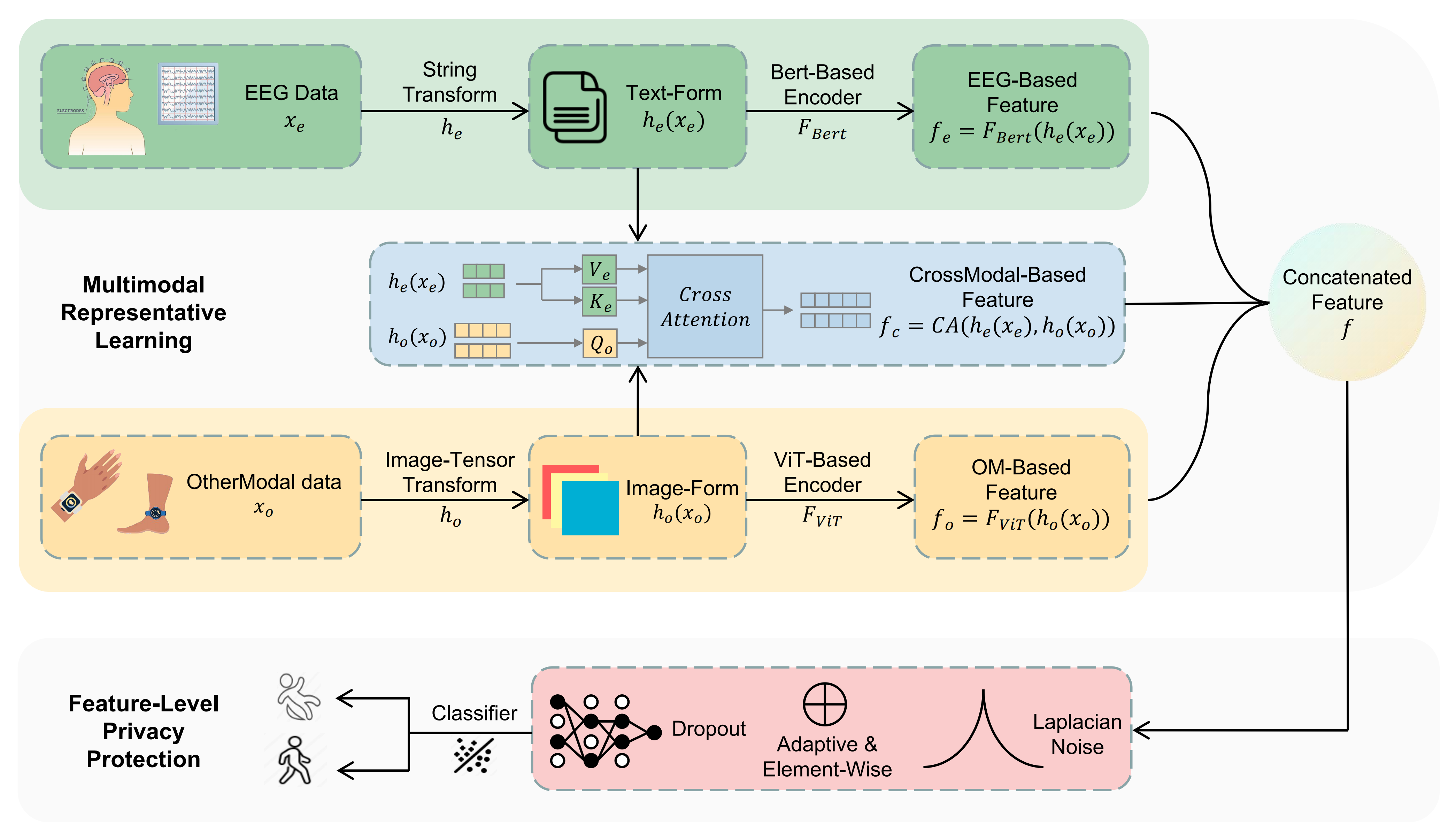}}
\caption{Our proposed DP-MLD for FoG detection: In multimodal representative learning scheme (first three rows), process EEG using BERT as text and other modalities using ViT as images, with cross-attention decoders for feature extraction; In feature-level privacy protection scheme (last row), apply a feature-level Laplacian dropout under DP scheme, optimizing randomness allocation between dropout and Laplacian noise to maximize performance and privacy protection.}
\label{fig_summary}
\end{figure*}

% However, the analysis of EEG data might suffer from issues induced by low signal-to-noise ratios and non-stationarity. To address these issues, deep learning techniques have shown considerable promise \cite{bigdely2015prep, jas2017autoreject, cole2019cycle, gramfort2013time}. Besides, advances in multimodal data acquisition technologies enable comprehensive collections of various physiological signals, thereby improving model accuracy by integrating richer data inputs \cite{giannakos2019multimodal}, and multimodal deep learning has demonstrated significant potentials in FoG detection with EEG data \cite{skaramagkas2023multi}.

% This paper focuses on a multimodal EEG dataset for FoG classification \cite{zhang2022multimodal}, which includes EEG along with other physiological signals such as electromyography (EMG), electrocardiography (ECG), and motion data from accelerometers (ACC). 
Research on multimodal EEG deep learning 
% for FoG detection
has laid the groundwork for exploring complex interactions between brain functions and related physiological activities \cite{baig2019survey}. 
% , potentially providing new insights into PD management \cite{bajpai2023multimodal, mesin2022multi, hou2023flexible, guo2022high,goel2023ensemble}. 
Despite these advances, the accuracy of some current methods remains insufficient for reliable clinical application, and the integration of multimodal data is not yet state-of-the-art \cite{baig2019survey}. There is considerable potential for improving techniques for multimodal integration to enhance their effectiveness and robustness. In this paper, we propose a novel approach for multimodal EEG learning, treating EEG data similarly to natural language in language models like BERT and handling other modalities as images in Vision Transformers (ViT). We also introduce cross-attention decoders to extract cross-modal features, and consider privacy protection to ensure the security of sensitive clinical data, which is commonly ignored in previous multimodal EEG studies \cite{baig2019survey}.

Privacy protection is crucial in multimodal EEG studies due to the sensitive nature of the clinical data involved \cite{miotto2018deep}. There exist critical needs for robust privacy-preserving schemes that safeguard private information without compromising performance, and differential privacy (DP) \cite{dwork2006calibrating} stands out as a widely adopted one. DP ensures that the inclusion or exclusion of any single individual data has minimal impact on the outcome of data analysis, thereby limiting the ability of adversaries to infer personal information from aggregated datasets. Various approaches within the DP framework have been proposed, primarily categorized into two streams: DP-SGD \cite{abadi2016deep} and feature-level perturbation \cite{lyu2020differentially}. Adding noise to gradients through the established DP-SGD method \cite{bassily2014private, abadi2016deep} is less feasible for large-scale models due to the increased computational and memory overhead for the large number of parameters, as well as the risk of destabilizing the training process with excessive noise. In contrast, feature-level perturbations \cite{yin2024primonitor, cai2022multimodal, liu2020multimodal}, which add noise directly to features extracted via deep learning models, offer a more straightforward and explicit approach. 

In the context of feature-level privacy protection with DP, some research considers using standard dropout techniques combined with Laplacian noise \cite{lyu2020differentially, yin2024primonitor}. However, there is a notable research gap in applying feature-level DP specifically to multimodal EEG data due to the complexity and sensitivity of neurological information. In particular, whereas dropout can be advantageous by preventing attackers from inferring feature values and making it difficult to determine if a feature is dropped, existing methods that apply a uniform dropout rate across all features \cite{lyu2020differentially} may not provide optimal obfuscation, especially where attackers test it extensively. To this end, we propose to assign an element-wise dropout rate to each feature to accommodate the scenario that some features may be more critical for performance \cite{hou2019weighted, sanjar2020weight, herrmann2020channel} or contain highly sensitive information \cite{biswas2022privacy}. Further, both dropout and adding Laplacian noise to the feature induce randomness in order to protect privacy, so we  consider a dual scheme with element-wise dropout rate and Laplacian noise scale to optimize the performance under given privacy budget, which is an improvement of traditionally sequential privacy budget calculations of dropout and Laplacian noises addition. In this paper, we introduce a novel approach called Laplacian dropout with DP, which involves adaptive randomness allocation to both element-wise dropout and Laplacian noise addition. These parameters are automatically optimized to maximize performance within a given total privacy budget. During training, we utilize a two-step minimization approach for both model parameters and privacy parameters, incorporating Gumbel-Softmax \cite{jang2016categorical} to optimize the privacy parameters. This method is objective-driven and offers a more intuitive privacy guarantee, setting a basis for future advancements in the field.

We apply our method to freezing of gait (FoG) detection for PD diseases. PD is one of the highest ranked common neurodegenerative conditions, with its prevalence rising with age \cite{ma2014prevalence,national2004parkinson}. Among the various symptomatic challenges it presents, FoG is particularly debilitating, affecting over two-thirds of patients in the later stages of the disease \cite{tan2011freezing}. FoG is characterized by sudden and involuntary cessation of movement, which makes patients feel as if their feet are glued to the ground, and often results in falls and significantly reduce the quality of life of patients \cite{nutt2011freezing, walton2015major}. Therefore, The automatic detection of FoG is crucial to ensure the safety of PD patients, and existing literature has shown that EEG is valuable in predicting FoG \cite{wang2020characterization}. In the experiment, we use a multimodal EEG dataset for FoG classification \cite{zhang2022multimodal}, which includes EEG along with other physiological signals such as electromyography (EMG), electrocardiography (ECG), motion data from accelerometers (ACC) and Skin Conductor (SC).

In summary, we propose Differentially Private Multimodal Laplacian Dropout (DP-MLD) for EEG Representative Learning, and the framework is illustrated in \Cref{fig_summary}. Our main contributions are:

\begin{enumerate}

    \item We introduce a novel multimodal EEG learning algorithm that processes EEG data using techniques akin to text models like BERT and other physiological data using image models such as ViT, additional with a decoder for cross-modal feature extraction, marking the first attempt to apply test-image based multimodal learning for EEG and other modal data.
    \item We propose an adaptive Laplacian dropout DP scheme, which optimizes randomness allocation between dropout and Laplacian noise, and makes dropout rate trainable using Gumbel-Softmax, ensuring strong privacy protection while maximizing performance.
    \item Our experiments on an open-source multimodal dataset for FoG detection in PD achieve 98.8\% accuracy, surpassing existing state-of-the-art methods by about 4\%, even though the latter do not consider privacy protection at all.

\end{enumerate}
The remainder of this paper is organized as follows: \Cref{Literature review} gives some background and related works. \Cref{method} introduces our proposed DP-MLD for EEG representative learning. \Cref{Experiments} shows experiment results on the FoG dataset \cite{zhang2022multimodal}. Finally, \Cref{Conclusions} gives conclusions of our method and discusses future work.

\section{RELATED WORKS}
\label{Literature review}
This section covers the background on multimodal EEG and performance comparison on the FoG detection dataset \cite{zhang2022multimodal}, differential privacy (DP), and general multimodal deep learning techniques under DP, including traditional feature-level perturbation methods that inspired our approach. Gumbel-Softmax (GS) technique is also reviewed, which is used in our training. Additionally, we compare the advantages of our method with existing approaches.

\subsection{Multimodal EEG}
Multimodal EEG is extensively studied across various applications, such as emotion recognition \cite{cai2020feature,zheng2014multimodal, zheng2018emotionmeter,guo2019multimodal,lu2015combining,zhao2021expression}, brain-computer interface (BCI) systems development \cite{duan2015design,leeb2010multimodal,ji2016eeg,chen2022multimodal}, epilepsy research \cite{chang2021multimodal,gotman2006combining,memarian2015multimodal,storti2012multimodal}, Alzheimer's disease diagnosis \cite{polikar2010multimodal,liu2013multiple,jesus2021multimodal,zhang2021alzheimer,colloby2016multimodal}, and seizure detection \cite{vandecasteele2021power,furbass2017automatic,sirpal2019fnirs}.

How to improve prediction accuracy of FoG in Parkinson's Disease with multimodal EEG dataset \cite{zhang2022multimodal} is a widely discussed topic. An ensemble technique is proposed and achieve an accuracy of 88.47\% \cite{goel2023ensemble}. A proxy measurement (PM) model based on a long-short-term-memory (LSTM) network achieves an accuracy of $93.6\% \pm 1.8\% $ \cite{guo2022high}. SVM and KNN with handcrafted features achieve an accuracy of $88\%$ \cite{mesin2022multi}.  Multimodal DNN reaches a detection sensitivity of $0.81$ and specificity of $0.88$ \cite{hou2023flexible}. An ensemble model comprising two neural networks (NNs) achieves optimal accuracies of $92.1\%$ one second prior and $86.2\%$ five seconds prior to the event \cite{bajpai2023multimodal}. 
However, there still exists a strong need to improve the prediction accuracy of FoG in Parkinson's Disease given the severe consequences of falls for the old people. Besides, there is a lack of consideration for privacy protection, which is especially critical when utilizing clinical datasets. This oversight highlights the need for integrating robust privacy measures in future multimodal EEG research.

\subsection{Differential Privacy}
Differential Privacy (DP) is a scheme for privacy protection that if changing an individual data in the data set does not cause much change in the outcome, the adversarial may not be able to speculate the real data set as defined in Definition \ref{cp2: def: dp} below.

\begin{definition}
\label{cp2: def: dp}
    (Differential privacy,  \cite{dwork2006calibrating}) 
    A mechanism $\mathcal{A}$ is said to satisfy $\varepsilon$-differential privacy ($\varepsilon$-DP) if for all pairs $\mathbf{x},\mathbf{x}^{'} \in \mathcal{X}$ which differ in only one entry, and for any outcome $O \subseteq \text{range}(\mathcal{A})$, we have 
\begin{equation}
\left|\ln(\frac{Pr(\mathcal{A}(\mathbf{x})\in O)}{Pr(\mathcal{A}(\mathbf{x}^{'})\in O)})\right| \leq \varepsilon.
\end{equation}
\end{definition}

In Definition \ref{cp2: def: dp}, $\varepsilon$ is a parameter controlling privacy leakage and is known as privacy budget. A smaller $\varepsilon$ indicates better privacy protection at the potential cost of statistical accuracy and efficiency. 
% DP has the composition properties \cite{zhao2017composition} listed in Lemma \ref{cp2: lm1} and Lemma \ref{cp2: lm2}, which we will use later for privacy budget allocation for parameters and privacy guarantee in cross-validation.

% \begin{lemma}
% \label{cp2: lm1}(Sequential composition theorem, \cite{zhao2017composition})
% Let $M_i$ each provides $\varepsilon_i$-DP, then the sequence of $M_i(X)$ provides ($\sum_i \varepsilon_i$)-DP.     
% \end{lemma}

% \begin{lemma}
% \label{cp2: lm2}(Parallel composition theorem, \cite{zhao2017composition}) Let $M_i$ each provide $\varepsilon$-DP. Let $D_i$ be arbitrary disjoint subsets of the input domain $D$. The sequence $M_i(X\cap D_i)$ provides $\varepsilon$-DP. 
% \end{lemma}

To achieve DP, we often need to add some amount of noise to the data or the estimate, and the amount is often determined by the sensitivity defined in Definition \ref{cp2: def:sensitivity}.

\begin{definition}
\label{cp2: def:sensitivity}
    (Sensitivity, \cite{dwork2006calibrating}) 
    The sensitivity of a function $\Gamma$ is the smallest number $S(\Gamma)$ such that for all $\mathbf{x},\mathbf{x}^{'}\in \mathcal{X}$ which differ in a single entry,
\begin{equation}
    \lVert \Gamma(\mathbf{x})-\Gamma(\mathbf{x}^{'})\rVert_1 \leq S(\Gamma),
    \label{intro: sensi_p}
\end{equation}
\end{definition}
where $\|\cdot\|_1$ stands for $L_1$ norm. 

For a random algorithm $\Gamma$, to achieve $\varepsilon$-DP, we often consider the Laplace mechanism in Definition \ref{cp2: def:laplace}. 
% The Laplace mechanism introduces additional randomness to protect privacy, which usually brings damage to accuracy and efficiency in inference and we need to appropriately account for it.

\begin{definition}
(Laplace mechanism, \cite{dwork2006calibrating})
\label{cp2: def:laplace}
For all function $\Gamma$ that maps data sets to  $\mathbb{R}^d$, $\Gamma(\mathbf{x}) + \mathbf{w}$ is $\varepsilon$-DP, where $\mathbf{w}=\{w_k\}_{k=1}^d$ is the added Laplacian noise with entry $w_k \sim \text{\text{Lap}}(S(\Gamma)/\varepsilon)$, and $\text{Lap}$ denotes a zero-mean Laplacian distribution with scale $S(\Gamma)/\varepsilon$. 
% Sensitivity $S(\Gamma)$ is calculated in \eqref{intro: sensi_p}.
\end{definition}

We then introduce a property of Laplacian distribution, which will be used in our process of noise generation:

\begin{property}
\label{cp4: property}
    If $t \sim \text{Lap}(1)$, then $ bt \sim \text{Lap}(b)$.
\end{property}

DP for traditional tasks has been widely studied, including count queries \cite{li2010optimizing}, histograms \cite{xu2013differentially}, and other statistical estimates \cite{asi2022optimal}. Recent research on multimodal deep learning under DP, which is closest to our topic, is introduced in the next section.

% For example, \cite{li2010optimizing} considers  under DP by adding noise to the count of specific attributes in a dataset and \cite{xu2013differentially} considers DP histograms by adjusting the frequency of each bin with noise. Another common task under DP is the calculation of means \cite{asi2022optimal} where noise is systematically added to both the sum of the data points and the count of the entries before division. Linear regression models are also considered under DP \cite{zhang2012functional} where techniques such as objective perturbation \cite{kifer2012private} — adding noise to the loss function used to train the model — or output perturbation \cite{fang2019regression} — directly adding noise to the regression coefficients—are employed. However, existing methods under DP often focus on relatively simple quantities, data and models, such as mean, low-dimensional data and regression models respectively, and may not be applicable when the analysis aims for intricate quantities, large datasets, and complex models in clinical studies. This limitation underscores the motivation behind this thesis, which seeks to develop more robust and adaptable analytical approaches capable of addressing the new demands in modern clinical studies.

\subsection{Multimodal Deep Learning under DP}
\label{Feature-Level Perturbation}
Training in DP \cite{dwork2006our} can be broadly categorized into two methods: adding noise to gradients, and to extracted features \cite{yin2024primonitor}. As for the gradient noise injection method, the most widely used scheme is DP-SGD \cite{song2013stochastic,abadi2016deep}, which injects noise into the gradients before each update to the model weights. Further expansion on DP-SGD includes a stateful DP mechanism (DP-FTRL) \cite{kairouz2021practical}, a matrix factorization scheme based on DP-FTRL (MF-DP-FTRL) \cite{choquette2022multi}, etc. However, implementing DP-SGD is not straightforward when the model is large and complex. 

Research in DP also extends to multimodal learning, an area that is rapidly gaining interest. For instance,  the privatization of tensor fusion in multimodal settings is explored, employing techniques like CP decomposition and OME for low-dimensional tensors, and Fourier transform for high-dimensional tensors \cite{cai2022multimodal}. On the other hand, some research proposes adding privacy safeguards to the Contrastive Language-Image Pre-Training (CLIP) model using DP-SGD \cite{huang2023safeguarding}. Some demonstrate effective DP representation learning through image captioning adapted for large-scale multimodal datasets \cite{sander2024differentially}. Additionally, some other research explores applications of DP in federated learning and classification tasks involving multiple data streams, respectively, with a focus on gradient privacy \cite{benouis2023privacy, alshareef2023differential}. However, an efficient DP scheme for multimodal EEG deep learning is still lacking.

% Thus we consider the feature-level noise injection, which is explicit and easy to implement.

% \subsection{Feature-Level Perturbation with DP}
% \label{Feature-Level Perturbation}
As for the feature-level perturbation method, noises are injected to the extracted feature obtained by networks, which is explicit and easy to implement. To the best of our knowledge, one of the most common scheme for feature-level noise injection is from a framework with a word dropout which masks words and the DP
noise injection for NLP learning \cite{lyu2020differentially}. The brief idea of dropout and Laplacian noise injection under DP is detailed as below. 

% They add Laplacian noise with a predefined scale parameter to features post-dropout. They also present a method to calculate a privacy budget that depends on the dropout rate and the scale of the privacy parameter, specifically tailored for NLP studies. 

Let $\mathbf{f}$ denote the extracted representation feature, and $\tilde{\mathbf{f}}$ is obtained via $\mathbf{f}$ after dropout with rate $\mu$. Then Laplacian noise \cite{dwork2006calibrating} is injected to ensure privacy protection:
\begin{equation}
\mathcal{M}(\tilde{\mathbf{f}}) = \tilde{\mathbf{f}} + \mathbf{r},
\end{equation}
where $\mathbf{r} \sim \text{Lap}(b)$, and $b = \frac{S}{\varepsilon^{'}}$. Here, $S$ is the sensitivity of the feature vector $\tilde{\mathbf{f}}$, and $\varepsilon^{'}$ is the temporary privacy budget for $\tilde{\mathbf{f}}$. Given the complexity of accurately estimating true sensitivity,  feature normalization \cite{shokri2015privacy} is applied to $\tilde{\mathbf{f}}$ and then $S = 1$. Under this scheme, $\mathcal{M}(\tilde{\mathbf{f}})$ is $\varepsilon$-DP with final privacy budget $\varepsilon = \ln \left[(1-\mu)\exp (\varepsilon^{'})+\mu \right]$.

The extension of this feature-level perturbation to recurrent variational autoencoder \cite{wang2023differentially}, encoder \cite{maheshwari2022fair}, and random clustered representations perturbations (TextObfuscator)  \cite{zhou2023textobfuscator} are also explored. As for the extension to multimodal learning, a method with application to emotion detection is proposed with a hyperparameter tuning method and Generalized Random Response technique \cite{yin2024primonitor}. However, a naive idea of hyperparameter allocation with Bayesian optimization across modals is used \cite{yin2024primonitor}, and the fixed dropout rate and scale of the privacy parameter may fail to protect privacy if the adversarials attack the algorithm many times \cite{lyu2020differentially}. Thus we argue against uniform dropout rates and Laplacian noise scales for all features.
% Then we give some background about dropout and illustrate our motivation for Laplacian dropout DP.

%%% However, regarding the privacy budget allocation among diverse modals, they just employ a common method with hyperparameter allocation. The development of effective privacy schemes specifically for multimodal EEG learning remains underexplored. 

% Dropout is a regularization technique in neural networks to mitigate overfitting \cite{srivastava2014dropout}, which arises when a model excessively learns the nuances and noise of training data, compromising its ability to generalize to new data. Dropout addresses this issue by randomly deactivating a fraction of neurons during training, thereby reducing reliance on specific neuron weights \cite{baldi2013understanding}. 

%%% Extensions of dropout include spatial dropout, where entire feature maps are dropped instead of individual neurons \cite{tompson2015efficient}.

% Classical feature-level perturbation method has proposed incorporating Laplacian noise following dropout to ensure DP \cite{lyu2020differentially}. 

% However, we argue against uniform dropout rates and Laplacian noise scales for all features based on 
Current research results show that some features may be more critical for performance \cite{hou2019weighted, sanjar2020weight, herrmann2020channel} or contain highly sensitive information \cite{biswas2022privacy}. Additionally, we recognize the complementary nature of adding Laplacian noise and dropout, both of which introduce randomness for privacy protection. Thus, we advocate for assigning an element-wise dropout rate to each feature and dynamically determining the Laplacian noise scale within the constraints of the total privacy budget. This approach allows for optimal allocation of dropout rates and Laplacian noise, leading to enhanced model performance while ensuring robust privacy protection.

\subsection{Gumbel-Softmax}
\label{sec: GS}
We give the background of Gumbel-Softmax (GS), which is used in our training for dropout parameters. GS \cite{jang2016categorical} is proposed to approximate categorical variables with a differentiable approximation that facilitates gradient-based optimization. It can be derived from the Gumbel-Max technique \cite{gumbel1954statistical}, which involves sampling from a Gumbel distribution to approximate categorical samples. To be specific, suppose $z$ is a categorical variable with class probabilities probabilities \(\pi = (\pi_1, \pi_2, \ldots, \pi_K)\), and categorical samples are encoded as k-dimensional one-hot vectors. The Gumbel-Max technique proposes a perspective of generating samples from the categorical distribution with class probabilities $\pi_i$ with Gumbel variables $g_i \sim \operatorname{Gumbel}(0,1)$ i.i.d, where the Gumbel distribution is defined as \(G \sim -\log(-\log(U))\), with \(U\) being uniform \([0,1]\):

\begin{equation}
    z=\text {onehot}\left(\underset{i}{\arg \max } \left[g_i + \log (\pi_{i})\right]\right).
\end{equation}

Then the Gumbel-Softmax \cite{jang2016categorical} uses an approximation of $softmax$ for $argmax$, which is given by:

\begin{equation}
    y_i = \frac{\exp\left((\log \pi_i + g_i) / \tau\right)}{\sum_{j=1}^K \exp\left((\log \pi_j + g_j) / \tau\right)}, i = 1, \cdots, K,
\end{equation}
where \(\tau\) is a temperature parameter that controls the smoothness of the distribution. As \(\tau\) approaches zero, the Gumbel-Softmax distribution approaches a one-hot encoded vector, closely approximating the categorical distribution. In practice, the Gumbel-Softmax distribution allows gradients to be backpropagated through the discrete choices, enabling the training of models. 
% Recent research  expanded on the original GS includes the application to reinforcement learning \cite{maddison2017concrete} and natural language processing tasks \cite{yang2017improving}, demonstrating its versatility in handling discrete choices within differentiable frameworks.

\section{DP-MLD: Multimodal Laplacian Dropout with Differential Privacy}
\label{method}
In this section, we introduce our proposed DP-MLD for EEG representative learning, which includes a novel multimodal EEG learning algorithm and an adaptive feature-level Laplacian dropout scheme under DP.

\subsection{Multimodal Representative Learning}

We present the framework of our proposed method for handling multimodal data, which we categorize into two main modalities: EEG data $\mathbf{x}_e$ and Other Modal (OM) data $\mathbf{x}_o$. In the example application to the FoG classification dataset, EEG data $\mathbf{x}_e$ is collected via devices measuring electrical activity, while OM data $\mathbf{x}_o$ includes ACC and SC data collected by sensors.

We treat EEG data as textual information, considering the inherent sequential nature of both the EEG data and text, and apply Bert \cite{devlin2018bert} to extract semantic features. This approach is supported by findings suggesting that EEG signals can convey textual semantic meaning, which can be effectively extracted using large language models \cite{wang2023large}. Whereas previous work has also explored treating EEG as images and applying models designed for image feature extraction \cite{wang2023large}, our experiments demonstrate that Bert performs better on the dataset used in this research. In contrast to previous exploration \cite{wang2023large}, which primarily focuses on single-modal EEG feature extraction, our study considers the multimodal scenario where OM data is transformed into image tensors through interpolation and reshaping. These tensors are then input into an image-based model such as ViT \cite{dosovitskiy2020image}. We also design the cross-modal feature extraction scheme, which extends beyond previous research that only considers single-modal exploration with only EEG data \cite{wang2023large}. 

Our approach prioritizes configurations that demonstrate superior performance in our experiments, leading us to choose BERT, ViT, and 3-layer decoders for EEG, OM, and cross-modal feature extraction, respectively. We also explore other options, including treating EEG as an image, treating OM as text, using single-stream cross-attention modules with a 12-layer encoder and concatenated embeddings for cross-modal feature extraction, and ALBEF pretraining, which shows less efficiency.

In the following parts, we give a detailed introduction to some key elements of our multimodal EEG learning scheme, which includes processing EEG data as text for language models and other modal data as images for vision transformers, and incorporating well-designed cross-attention mechanisms to effectively extract and integrate cross-modal features.

\subsubsection{EEG Feature Learning via Bert}

Bert (Bidirectional Encoder Representations from Transformers) \cite{devlin2018bert} is a transformative model in natural language processing, capturing deep bidirectional context features. Building upon recent research on the application of large language models to EEG \cite{wang2023large}, which proposes that EEG signals can be processed as text or images to extract semantic features, we adopt the approach of treating EEG data analogously to textual data. We conceptualize each EEG signal as a sequential string of information. 
% Our decision to use Bert for treating EEG as text instead of images is guided by experimental results indicating superior performance with textual representation, as detailed in Section \ref{cp4: multimodal feature}.

To be specific, with the EEG signal input from differential channels $\mathbf{x}_{e}\in \mathbb{R}^{d_e}$, we apply a string transformation $h_e$ to it and transform the time-series data to a string data. Then the semantic meaning of the EEG signal is extracted with a Bert-based encoder: 
\begin{equation}
    \mathbf{f}_e = F_{\text{Bert}}(h_e(\mathbf{x}_{e})).
\end{equation}

\subsubsection{Other Modal (OM) Feature Learning via ViT}

As for the other modal data, including the data from Accelerator (ACC) and Skin Conductor (SC), we denote as “OM (other modal)” data for short. When incorporating OM data into our analysis, we use a series of preprocessing techniques, such as interpolation and reshaping skills. These techniques enable us to effectively transform the diverse data types into a unified image tensor format. By transforming the diverse data modalities into a unified image representation, we are able to harness the power of established image models, such as Vision Transformer (ViT) \cite{dosovitskiy2020image}, which have demonstrated remarkable capabilities in various computer vision tasks. 

To be specific, the preprocessing function $h_o$ is applied to transform the raw OM data into an image tensor suitable for processing by ViT. This might involve reshaping and interpolation methods to format the data into a consistent image shape, typically $H \times W \times C$ where $H$, $W$, and $C$ are the height, width, and number of channels of the image, respectively. Then the image semantic of OM data is extracted with a ViT-based encoder:

\begin{equation}
    \mathbf{f}_o = F_{\text{ViT}}(h_o(\mathbf{x}_o)).
\end{equation}

\subsubsection{Cross-Modal Feature Learning}

In order to capture the intricate dependencies and interactions between the EEG and OM data, we propose a multimodal representative learning framework 
% This framework combines the power of Bert \cite{devlin2018bert} and ViT \cite{dosovitskiy2020image}, 
to extract interactive features from EEG and OM data, which we refer to cross-modal features. 
% This cross-modal feature extraction scheme extends beyond previous research \cite{wang2023large}, which only consider single modal analysis for EEG. We also explore other cross-modal feature algorithms as stated in Section \ref{cp4: multimodal feature}, and our choice for cross-modal feature extraction scheme in this section is based on the performance comparison in our experiments.
The cross-modal features are extracted via cross-attention modules, specifically a decoder. These cross-attention modules facilitate the integration of the EEG and OM data by enabling cross-modal interactions. 
% The decoder layers dynamically focus on important features from both types of data, allowing them to influence and improve each other's representations. This cross-attention mechanism helps the model capture synergistic information between EEG and OM data, utilizing the natural relationships and dependencies within the multimodal dataset.

To be specific, we employ $L$ layers decoders with a cross-attention module to extract the cross-attention feature. The cross attention layer $CA_l$ is designed in \eqref{cp4: cross-attention} with $l=1,\cdots,L$ as:
\begin{equation}
    \text{CA}_l = \text{softmax} \left(\frac{W_Qh_o(\mathbf{x}_o)(W_Kh_e(\mathbf{x}_e))^T}{\sqrt{d_k}}W_Vh_e(\mathbf{x}_e)\right), \label{cp4: cross-attention}
\end{equation}
where in the cross-attention module, $W_Q$ is the projection matrix of query $Q$ with OM, and $W_K$, $W_V$ are the projection matrices of key $K$ and value $V$  with EEG, which follow the notations of the attention mechanism \cite{vaswani2017attention}. Then the cross-modal based feature is extracted with $\text{CA}$ constructed with $L$ layers of $\text{CA}_l$: 

\begin{equation}
    \mathbf{f}_c = \text{CA}(h_e(\mathbf{x}_e),h_o(\mathbf{x}_o)).
\end{equation}

% \begin{remark}
%     Our approach prioritizes configurations that demonstrate superior performance in our experiments, leading us to choose BERT, ViT, and 3-layer decoders for EEG, OM, and cross-modal feature extraction, respectively. We also explore other less efficient options, including treating EEG as an image, treating OM as text, using single-stream cross-attention modules with a 12-layer encoder and concatenated embeddings for cross-modal feature extraction, and ALBEF pretraining.
% \end{remark}

\subsection{Laplacian Dropout DP Scheme}
\subsubsection{Element-Wise Laplacian Dropout}

We observe a high adaptability between adding Laplacian noise and dropout to achieve DP. Intuitively, dropping out some features can prevent attackers from inferring their values, even making it difficult to know whether the feature is dropped or not. Though some previous research explores adding Laplacian noise after dropout and calculating the corresponding privacy guarantee level \cite{lyu2020differentially}, we believe that assigning the same dropout probability to each feature may not ensure effective obfuscation. Indeed, an attacker could potentially infer individual features after repeated access to the algorithm. Therefore, we propose incorporating element-wise randomness into our dropout design, adjusting the scale of Laplacian noise accordingly. This approach ensures that the performance is automatically optimized within the specified privacy budget.

Formally, consider we have a $k$-dim feature $\mathbf{f} = (f_1,\cdots,f_k)$ to be privately protected. Suppose we allocate a unique dropout rate to each of them: $\mathbf{w} = (w_1,\cdots,w_k)$. For each feature $f_i$, it suffices to first generate a mask indicator $m_i$ with a Bernoulli distribution:

\begin{equation}
m_i =  \begin{cases}
0, \text{with probability} \quad w_i\\
1, \text{with probability} \quad 1-w_i.
\label{cp4: equa1}
\end{cases}
\end{equation}
Then the feature after dropout is $\mathbf{f}\odot \mathbf{m}$, where $\mathbf{m}$ is the indicator vector $\mathbf{m} = (m_1,\cdots,m_k)$ and $\odot$ means element-wise multiplication.

We then explore the relationship for privacy budget allocation between the element-wise dropout rate and Laplacian scale under the constraint of the total privacy budget in \Cref{thm: dropout}. 

\begin{theorem}
\label{thm: dropout}
    Given the feature $\mathbf{f}$, suppose $\mathcal{M}(\mathbf{f}) = \mathbf{f} + \mathbf{r}$ is $\varepsilon^{'}$-DP by adding element wise Laplacian noise $\mathbf{r} = (r_1,\cdots,r_k)$ with $r_i\sim$ Lap($\frac{1}{\varepsilon_i^{'}}$) and $\varepsilon_i^{'} = \log\left(\frac{exp(\varepsilon)-w_i}{1-w_i}\right)$ to the feature $\mathbf{f}$, and denote the feature after dropout as $\tilde{\mathbf{f}} = \mathbf{f} \odot \mathbf{m}$, then $\mathcal{M}(\tilde{\mathbf{f}}) = \tilde{\mathbf{f}} + \mathbf{r}$ is $\varepsilon$-DP.
\end{theorem}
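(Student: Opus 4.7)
The plan is to bound the output density ratio of $\mathcal{M}(\tilde{\mathbf{f}})$ pointwise for every $\mathbf{o}\in\mathbb{R}^k$ and every pair of neighboring feature vectors $\mathbf{f}, \mathbf{f}'$. Because the dropout masks $\{m_i\}$ and noise components $\{r_i\}$ are all mutually independent, and each $r_i\sim\operatorname{Lap}(1/\varepsilon_i')$ by Property \ref{cp4: property}, marginalizing each Bernoulli $m_i \in \{0,1\}$ gives the factorized mixture density
\begin{equation*}
p_{\mathbf{f}}(\mathbf{o}) \;=\; \prod_{i=1}^{k}\Bigl[\, w_i\, L_i(o_i) \,+\, (1-w_i)\, L_i(o_i - f_i)\,\Bigr],
\end{equation*}
where $L_i$ is the zero-mean Laplace density with scale $1/\varepsilon_i'$.

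The core of the argument is a per-coordinate mixture bound. Applying the Laplace DP guarantee from Definition \ref{cp2: def:laplace}, together with the sensitivity-$1$ feature normalization recalled in Section \ref{Feature-Level Perturbation}, I have $L_i(o_i - f_i) \leq \exp(\varepsilon_i')\, L_i(o_i - f_i')$. Substituting this into the numerator of the per-coordinate ratio and taking the supremum over $o_i$, I would establish
\begin{equation*}
\frac{w_i L_i(o_i) + (1-w_i) L_i(o_i - f_i)}{w_i L_i(o_i) + (1-w_i) L_i(o_i - f_i')} \;\leq\; (1-w_i)\,\exp(\varepsilon_i') + w_i.
\end{equation*}
Rearranging the defining identity $\varepsilon_i' = \log\!\bigl((\exp(\varepsilon)-w_i)/(1-w_i)\bigr)$ in the theorem statement, the right-hand side equals $\exp(\varepsilon)$, so every coordinate contributes at most a factor $\exp(\varepsilon)$.

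To assemble the joint guarantee, I would multiply the per-coordinate bounds and invoke the $L_1$ sensitivity $\|\mathbf{f}-\mathbf{f}'\|_1 \leq 1$. Coordinates on which $f_i = f_i'$ contribute exactly $1$, so only the differing coordinates matter; the remaining factors collapse into $\exp(\varepsilon)$ once the per-coordinate bound is refined to scale with $|f_i - f_i'|$ using the tighter Laplace estimate $L_i(o_i - f_i) \leq \exp(\varepsilon_i' |f_i - f_i'|) L_i(o_i - f_i')$. Integrating the pointwise density ratio bound over any measurable outcome set then yields the $\varepsilon$-DP condition of Definition \ref{cp2: def: dp}.

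The hard part is the per-coordinate mixture inequality. A naive substitution of the Laplace DP bound into the numerator produces only the loose estimate $\exp(\varepsilon_i')$, not the sharper mixture quantity $(1-w_i)\exp(\varepsilon_i')+w_i$; extracting the dropout gain requires exploiting that the shared term $w_i L_i(o_i)$ appears identically in the numerator and denominator and dampens the ratio. This dampening relies crucially on the feature normalization, since without a bound on $|f_i|$ and $|f_i'|$ the common Laplace term can be vanishingly small compared to $L_i(o_i-f_i')$ at the extremal $o_i$, in which case the dropout mixture would offer no advantage over the plain Laplace bound.
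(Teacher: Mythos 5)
Your decomposition is the same one the paper uses: condition on whether the differing coordinate is dropped, observe that the two inputs become identical in the dropped branch and $\varepsilon_j'$-close in the kept branch, and recombine with weights $w_j$ and $1-w_j$ to get the factor $w_j+(1-w_j)\exp(\varepsilon_j')=\exp(\varepsilon)$; you merely phrase it at the level of the factorized mixture density rather than of output events. The genuine gap is that your central per-coordinate inequality is only asserted (``I would establish''), and it is precisely the step that needs an argument. Bounding the numerator via $L_i(o_i-f_i)\le\exp(\varepsilon_i')L_i(o_i-f_i')$ and clearing denominators leaves a residual term $w_i(1-w_i)\bigl(\exp(\varepsilon_i')-1\bigr)\bigl(L_i(o_i)-L_i(o_i-f_i')\bigr)$ whose sign must be nonnegative, i.e.\ that route closes only when the ``dropped'' density $L_i(o_i)$ dominates $L_i(o_i-f_i')$, which fails for $o_i$ near $f_i'$. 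Your own closing remark is correct: without a bound on $|f_i|,|f_i'|$ the claim is actually false (take $f_i=10$, $f_i'=9$, $o_i=10$: the mixture ratio tends to $\exp(\varepsilon_i')>\exp(\varepsilon)$), so the boundedness supplied by feature normalization must be invoked explicitly and the casework on the position of $o_i$ relative to $0$, $f_i$, $f_i'$ carried out with the explicit Laplace form. As written the proof is incomplete, though to your credit you have put your finger on the real crux: the paper's own proof makes the same move silently, recombining the conditional probabilities given $m_j=0$ and $m_j=1$ against the unconditional $\Pr\{\mathcal{M}(\tilde{\mathbf{f}}_2)=S\}$ as if they coincided.

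A second, smaller problem is the assembly step. The paper only treats neighbours differing in a single coordinate, so it never multiplies per-coordinate factors; if you allow several coordinates to differ, the product $\prod_i\bigl(w_i+(1-w_i)\exp(\varepsilon_i'|f_i-f_i'|)\bigr)$ is not bounded by $\exp(\varepsilon)$ merely because $\sum_i|f_i-f_i'|\le 1$ --- the mixture bound does not exponentiate additively the way the pure Laplace bound does. You should restrict to single-coordinate adjacency as the theorem's proof does and drop the proposed refinement via $|f_i-f_i'|$.
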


\begin{proof}
    Suppose there are two adjacent features $\mathbf{f_1}$ and $\mathbf{f_2}$ that differ only in the $j$-th coordinate. For arbitrary masking vector $\mathbf{m}$, we denote the feature after dropout as $\tilde{\mathbf{f}}_1=\mathbf{f_1} \odot \mathbf{m}, \tilde{\mathbf{f}}_2=\mathbf{f_2} \odot \mathbf{m}$. Considering there are two possible cases, i.e., $m_j = 0$, and $m_j = 1$.

    If $m_j = 0$, then after dropout, $\tilde{f}_{1j} = \tilde{f}_{2j} = 0$. Since $f_{1i} = f_{2i}$ for $i \neq j$, then we have $\tilde{\mathbf{f}}_1 = \tilde{\mathbf{f}}_2$. Naturally, we have 
    \begin{equation}
        Pr\{\mathcal{M}{(\tilde{\mathbf{f}}}_1)  = S \} = Pr\{\mathcal{M}{(\tilde{\mathbf{f}}}_2)  = S\}.
    \end{equation}

    On the other hand, if $m_j = 1$, which means the $j$-th feature is not dropped out, then $\tilde{\mathbf{f}}_1$ and $\tilde{\mathbf{f}}_2$ are still only differ in the $j$-th element. According to the definition of DP, 
    we have 
    \begin{equation}
        Pr\{\mathcal{M}{(\tilde{\mathbf{f}}}_1)  = S\} \leq exp(\varepsilon^{'}_j)Pr\{\mathcal{M}{(\tilde{\mathbf{f}}}_2)  = S\}.
    \end{equation}
    Considering the $j$-th feature has the probability $w_j$ to be dropped, we can combine the two cases according to the corresponding probability:
\begin{align}
Pr\{\mathcal{M}{(\tilde{\mathbf{f}}}_1)  = S\}  = &Pr\{\mathcal{M}{(\tilde{\mathbf{f}}}_1)  = S|m_j = 0\} \\
&+ Pr\{\mathcal{M}{(\tilde{\mathbf{f}}}_1)  = S|m_j = 1\} \nonumber \\
\leq & Pr(m_j=0) Pr\{\mathcal{M}{(\tilde{\mathbf{f}}}_2)  = S\} \\
&+ Pr(m_j=1) Pr\{\mathcal{M}{(\tilde{\mathbf{f}}}_2)  = S\} \nonumber \\
= & \left[w_j + (1-w_j)exp(\varepsilon^{'}_j)\right] Pr\{\mathcal{M}{(\tilde{\mathbf{f}}}_2)  = S\} \nonumber \\ 
\leq & exp(\varepsilon) Pr\{\mathcal{M}{(\tilde{\mathbf{f}}}_2)  = S\}
\end{align}
That is to say that $\mathcal{M}(\tilde{f})$ is $\varepsilon$-DP.
\end{proof}

Our innovative element-wise budget allocation for dropout and Laplacian noise addition achieves an inherently adaptive optimization for optimal performance under a given fixed privacy budget. This approach not only increases the difficulty for adversaries attempting to infer features even after multiple accesses to the model, but also makes the dropout rate and Laplacian noise scale trainable for performance rather than being fixed hyperparameters.  

Within our framework, a smaller minimum dropout rate, $w_i$, results in a lower $\varepsilon_i'$, thus requiring the addition of more substantial noise. This supplementary aspect of the dual scheme of dropout and Laplacian noise addition offers options for prioritizing different features. Indeed, research indicates that certain features may be more crucial for performance \cite{hou2019weighted,sanjar2020weight,herrmann2020channel} or contain highly sensitive information \cite{biswas2022privacy}, which supports the advantage of our element-wise adaptation. Our experimental results, as detailed in \Cref{cp4: randomness_exp}, demonstrate the adaptability of our method to these insights, suggesting a correlation between feature magnitude and the allocation of randomness modes. Specifically, features with larger magnitudes are more likely to benefit from enhanced privacy through Laplacian noise, while features with smaller magnitudes are dropped out more frequently. This observation validates the effectiveness of our method in automatically balancing dropout and Laplacian noise for every feature.

% This paragraph offers an innovative perspective on dropout-based differential privacy, emphasizing a results-oriented approach to algorithm design. 
Now we compare our method with the classical feature-level perturbation approach introduced in \Cref{Feature-Level Perturbation} and highlight the advantages of our approach in terms of implementation and privacy protection. In the classical method \cite{lyu2020differentially}, the uniform variance of the Laplacian noise $\varepsilon^{'}$ and the dropout rate $\mu$ are set as hyperparameters. As a result, the privacy guarantee for the entire algorithm is calculated as $\varepsilon = \mu + (1-\mu)\exp(\varepsilon^{'})$. Under this model, hyperparameters may be optimized using external techniques such as Bayesian optimization, which can be computationally expensive. Additionally, the risk of exposing features is relatively high due to the uniform hyperparameters for dropout rate and Laplacian noise scale. In contrast, our approach alters the management of privacy parameters. We ensure that the entire algorithm adheres to a predefined privacy guarantee, denoted as $\varepsilon$. This allows each element-wise feature to dynamically decide between remaining inactive or adding noise to optimize performance within the specified privacy constraints. Our method of allocating the privacy budget between dropout and Laplacian noise offers a novel perspective on privacy parameter training, rather than treating them as fixed hyperparameters. Moreover, the element-wise parameter approach provides better privacy protection compared to the uniform parameters used in the traditional method. 

% Within our framework, a smaller minimum dropout rate, $w_i$, results in a lower $\varepsilon_i'$, thus requiring the addition of more substantial noise. 

% It enables each element-wise feature to dynamically choose between being inactive and adding noise to achieve optimal performance within the specified privacy constraints. 

% Indeed, some research recognizes that certain features may have greater importance for performance \cite{hou2019weighted,sanjar2020weight,herrmann2020channel} or hold highly sensitive information \cite{biswas2022privacy}. Our experiment results in Section \ref{cp4: randomness_exp} the adaptability of our method to this discovery, suggesting a potential correlation between feature magnitude and the allocation of randomness modes. Specifically, it appears that features with larger magnitudes are more likely to benefit from enhancing privacy with Laplacian noise. Conversely, features with smaller magnitudes tend to be dropped out more frequently. This aligns with our intuition and validates the effectiveness of our method in automatically allocating randomness between dropout and adding Laplacian noise.

\subsubsection{Two-step Optimization for training} 
Given that the privacy budget is fixed and denoted as $\varepsilon$, the variance of Laplacian noise can be expressed as a function of the corresponding dropout rate $\mathbf{w}$, formulated in vector form as $\varepsilon(\mathbf{w})$. Then the loss function is described with the general model parameters $\mathbf{p}$, the dropout rate parameter $\mathbf{w}$ and the corresponding Laplacian noise parameter $\varepsilon(\mathbf{w})$:

\begin{equation}
\min_{\mathbf{p},\mathbf{w}}L(f(D;\mathbf{p},\mathbf{w},\varepsilon(\mathbf{w}));y).
\end{equation}

$L(f(D;\mathbf{p},\mathbf{w},\varepsilon(\mathbf{w})))$ represents the loss of the downstream task, where $f(D)$ denotes the predictions made by model $f$ on data $D$, and $y$ represents the true labels. For instance, in the FoG classification task, $L$ could be chosen as the cross-entropy loss. 
% Therefore, with access to $\frac{dL}{d\mathbf{p}}$ and $\frac{dL}{d\mathbf{w}}$, we can alternatively perform gradient descent.

To make the scheme more concise, we consider Property \ref{cp4: property} of Laplacian noise. Then the feature after dropout and Laplacian noise perturbation can be written as $f_\mathbf{p}(D)\odot \mathbf{m} + \frac{1}{\varepsilon(\mathbf{w})}\mathbf{t}$ with $\mathbf{t} \sim \text{Lap}(1)$. Based on the loss function and the re-formulation, we propose a two-step optimization for training the model:

Step 1, given privacy parameters $\mathbf{w}$, masking parameter $\mathbf{m}$ is generated with the Bernoulli distribution with \eqref{cp4: equa1} accordingly. Then the goal is to optimize model parameters $\mathbf{p}$ as \eqref{4_equa: step1_gradient} where the process is standard as that in common deel learning:

\begin{equation}
\label{4_equa: step1_gradient}
\min_{\mathbf{p}}\mathbb{E}_{\mathbf{w},\mathbf{t}}L(f_\mathbf{p}(D)\odot \mathbf{m} + \frac{1}{\varepsilon(\mathbf{w})}\mathbf{t}).
\end{equation}

Step 2, given model parameters $\mathbf{p}$, our goal is to optimize $\mathbf{w}$, the dropout rates, as \eqref{4_equa: step2_gradient}: 
% To address this, we propose an innovative approach using Gumbel-Softmax to overcome these challenges.

\begin{equation}
\label{4_equa: step2_gradient}
\min_{\mathbf{w}}\mathbb{E}_{\mathbf{p},\mathbf{t}}L(f_\mathbf{p}(D)\odot \mathbf{m} + \frac{1}{\varepsilon(\mathbf{w})}\mathbf{t}).
\end{equation}

However, training $\mathbf{w}$ with the masking term $\mathbf{m}$ in the loss function poses challenges, and thus we adopt the technique of GS reviewed in \Cref{sec: GS} to make the scheme trainable. To be specific, we introduce a categorical vector $v_i = (v_{i1}, v_{i2})$ inferring the category of being dropped out or not for the $i$-th feature, then the masking vector is set as $m_i = v_{i2}$. Denote the class probabilities of being dropped out as $\pi_i = (w_i, 1-w_i)$, then following the idea of Gumbel-Max technique \cite{gumbel1954statistical}, the onehot categorical vector can be generated as:

% \begin{equation}
%     v_i=\text {onehot}\left(\underset{j}{\arg \max } (v_{ij})\right), j=1,2.
% \end{equation}

% Then we take $m_i$ = $z_{i2}$ as the element-wise dropout mask. For example, suppose the dropout rate of some feature is $w_i = 0.4$, then the auxiliary latent vector is $v_i = (0.4,0.6)$, and correspondingly $z_i = (0,1)$. Under this setting, the mask indicator is the second item of $z_i$, which is to set $m_i=1$. However, the $argmax$ is not differential, thus bringing challenges in backpropagation. To solve this challenge, we introduce Gumbel-Softmax (GS) \cite{jang2016categorical}.

% The Gumbel-Softmax is a technique used to sample from a categorical distribution in a differentiable manner, which is useful for training models with discrete variables. It leverages the Gumbel distribution to add noise to the logits of the categorical distribution, ensuring that the sampling process remains differentiable. The Gumbel-Softmax trick involves two main steps: adding Gumbel noise to the logits and then applying the softmax function to obtain a continuous approximation of the one-hot vector. This approach allows backpropagation through the sampling process, facilitating gradient-based optimization.

% Specifically, GS first provides a simple and efficient way to draw samples $z_i$ from a categorical distribution with the categorical auxiliary vector $v_i$:

\begin{equation}
    v^{\text{hard}}_i=\text {onehot}\left(\underset{j}{\arg \max}\left[g_j+\log (\pi_{ij})\right]\right), j=1,2,
    \label{cp4: dropout_hard}
\end{equation}
where $g_j$ are i.i.d samples drawn from $\operatorname{Gumbel}(0,1)$, i.e $g_j=-\log \left(-\log \left(u_j\right)\right)$, $u_j \sim U(0,1)$ for $j=1,2$, and “hard” means generating the vector via $argmax$. When testing, the masking vector $m_i = v_{i2}$ can be generated via \eqref{cp4: dropout_hard}. But when training, the challenge of backpropagation for $argmax$ arises. We adopt GS technique \cite{jang2016categorical}, which uses the $softmax$ function as a continuous, differentiable approximation to $argmax$. We generate categorical vector $v^{\text{soft}}_i$ as in \eqref{cp4: dropout_soft}, where the “soft” means generating the vector via $softmax$.

\begin{equation}
v^{\text{soft}}_{ij}=\frac{\exp \left(\left(g_j+\log \left(\pi_{ij}\right)\right) / \tau\right)}{\sum_{k=1}^2 \exp \left(\left(g_k+\log \left(\pi_{ik}\right)\right) / \tau\right)}, j=1,2. \label{cp4: dropout_soft}
\end{equation}

% The temperature parameter $\tau$ in the softmax function controls the smoothness of the distribution; a lower temperature makes the distribution more discrete-like, while a higher temperature makes it smoother. 
% Intuitively, the randomness introduced by the GS can enhance privacy protection by preventing the exact reconstruction of the original data. When Gumbel $g_i$ noise is added to the logits, it perturbs the data in a controlled, yet random manner, making it harder for adversaries to infer specific details about individual dropout rates, and may infer less on the importance of some features. 

To summarize, with given dropout rate $\mathbf{w}$, we can generate the masking term variable $\mathbf{m}$ from function $m(v(\pi,u|\mathbf{w}))$. where $v(\pi,u|\mathbf{w})$ is a function mapping from class probabilities $\pi = \pi(\mathbf{w})$ and uniform drawn variable $u$ to the categorical vector $\mathbf{v}$ via \eqref{cp4: dropout_soft} in training and via \eqref{cp4: dropout_hard} in testing. With $\mathbf{v} =v(\pi,u|\mathbf{w})$, $m(\mathbf{v})$ is a function to generate masking variable $\mathbf{m} = m(\mathbf{v})$ with $m_i = v_{i2}$. Then the optimization with regards to the dropout parameter $\mathbf{w}$ is:

\begin{equation}
\label{4_equa: step2_gradient_re}
    \min_{\mathbf{w}} \mathbb{E}_{\mathbf{t},u}L\left[f_{\mathbf{p}}(D)\odot m(v(\pi,u|\mathbf{w}))+\frac{1}{\varepsilon(\mathbf{w})\mathbf{t}}\right]
\end{equation}

\subsection{Algorithm Summary}

% Based on the training methodologies discussed, we note that alternating the order of dropout and Laplacian noise addition benefits the effectiveness of the training. Specifically, applying dropout before adding Laplacian noise results in dimensions that are merely filled with noise post-dropout, contributing no benefits to the training process. Conversely, if Laplacian noise is introduced before applying dropout, all remaining dimensions retain meaningful values post-dropout, enhancing performance while maintaining privacy protection under the same privacy budget. Consequently, we adjust our procedure to add Laplacian noise prior to dropout. 
To summarize, we first design a multimodal learning mechanism summarized in \Cref{cp4: alg_feature} to extract representative features, and then apply the two-step optimization for Laplacian dropout with DP. The whole process of our proposed method is summarized in \Cref{cp4: alg_summary}.

\begin{algorithm}[H]
\caption{Multimodal EEG representative learning $\mathcal{A}_m$}\label{cp4: alg_feature}
\begin{algorithmic}[1]
\STATE \textsc{Input:} EEG data $\mathbf{x}_e$, OM (other modal) data $\mathbf{x}_o$ 
\STATE \textbf{Extract EEG features} via text-based encoder Bert with text-targeted transformation: $\mathbf{f}_e = F_{\text{Bert}}(h_e(\mathbf{x}_{e}))$\;
\STATE \textbf{Extract other modality features} via image-based encoder ViT with image-targeted transformation: $\mathbf{f}_o = F_{\text{ViT}}(h_o(\mathbf{x}_o))$\;
\STATE \textbf{Extract cross-modal features} via cross-attention-based decoder: $\mathbf{f}_c = \text{CA}(h_e(\mathbf{x}_e), h_o(\mathbf{x}_o))$\;
\STATE \textbf{Combine and normalize features}: 

$\mathbf{f} = \text{Normalization}([\mathbf{f}_e, \mathbf{f}_o, \mathbf{f}_c])$
\STATE \textsc{Output:} Multimodal feature $\mathcal{A}_m(\mathbf{x}_e,\mathbf{x}_o) = \mathbf{f}$
\end{algorithmic}
\end{algorithm}

\begin{algorithm}[H]
\caption{DP-MLD for EEG representative learning}
\label{cp4: alg_summary}
\begin{algorithmic}[1]
\STATE \textsc{Input:} EEG data $\mathbf{x}_e$, other modal data $\mathbf{x}_o$, label $\mathbf{y}$.
\STATE \textbf{Extract multimodal feature} via \Cref{cp4: alg_feature} 
$\mathcal{A}_m(\mathbf{x}_e,\mathbf{x}_o)$.\;
\STATE \textbf{Apply Laplacian dropout} to get noised feature 

$\tilde{\mathbf{f}}=\mathcal{A}_m(\mathbf{x}_e,\mathbf{x}_o)\odot g_m(z(v,u|\mathbf{w}))+\frac{1}{\varepsilon(\mathbf{w})\mathbf{t}}$.\;
\STATE \textbf{Two-step optimization} with loss $L( \text{Classifier}(\tilde{\mathbf{f}}),\mathbf{y})$ via \eqref{4_equa: step1_gradient} and \eqref{4_equa: step2_gradient_re} and using soft \textbf{Gumbel-Softmax} applied in training process.
\end{algorithmic}
\end{algorithm}

\section{Experiments}
\label{Experiments}
\subsection{Dataset Description}
We conduct experiments on an open-source multimodal dataset of FoG in PD \cite{zhang2022multimodal}. This dataset encompasses data from multiple sources: electroencephalogram (EEG), skin conductance (SC), and acceleration (ACC), collected during walking tasks using various sensors. Twelve PD patients participated in the experiments, generating a total of 3 hours and 42 minutes of valid data. They first sort out the original data header, and unify the data sampling frequency to 500Hz. Then they segment and label the data. After this preprocess, there are 3003 datasets, each containing EEG data with 30 channels, OM data with 25 dimensions (including 24 dimensions of ACC data and 1 dimension of SC), and corresponding labels indicating the presence or absence of FoG. We split the dataset into 70\% training data and 30\% test data. Additionally, we use accuracy and macro-F1 score as the metrics for comparison. Accuracy measures the proportion of correct predictions out of the total predictions. Macro-F1 calculates the average F1 score across all classes, treating each class equally.

\subsection{Overall Performance}
The performance comparison on the FoG dataset \cite{zhang2022multimodal} is presented in \Cref{cp4: tb_result}, where our proposed DL-MLD demonstrates significant advantages despite the additional considerations for privacy protection. DL-MLD shows an approximate 4\% improvement in accuracy with a common standard privacy budget $\varepsilon=1.0$. This highlights the dual strength of our approach: enhancing performance while ensuring robust privacy protection, thereby establishing a new basis for future advancements in the field.

Besides, considering that while our method achieves privacy protection, the related works we compare with does not consider privacy protection. Therefore, for fairness and generalization, we also propose a “non-private” variant of our method, which preserves the multimodal learning scheme and removes the feature-level DP scheme. Notably, our non-private variant also shows a significant improvement, enhancing performance by approximately 5\%. This result again underscores the effectiveness of our proposed multimodal representative deep learning scheme.

\begin{table}
\begin{center}
\caption{Accuracy comparison of methods.}
\setlength{\tabcolsep}{3pt}
\begin{tabular}{|l|l|l|}
\hline Method & Accuracy (\%) & Macro-F1 (\%) \\
\hline Depth-width CNN \cite{hou2023flexible} (Hou, et al.) & 85.00 & 78.00 \\
ETC \cite{goel2023ensemble} (Goel K, et al.)& 88.47 & 88.30 \\
SVM \cite{mesin2022multi} (Mesin, Luca, et al.)& 88.00 & 86.73 \\
EEGFoGNet \cite{bajpai2023multimodal} (Bajpai, R., at.) & 92.10 & 85.00 \\
pmEEG-ACC \cite{guo2022high} (Guo, Yuzhu, et al.)& $93.6 \pm 1.8$ & $89.4 \pm 7.7$ \\
\textbf{DP-MLD (Non-private)} & $\mathbf{99.30}$ & $\mathbf{99.21}$ \\
\textbf{DP-MLD (Private with $\varepsilon = 1.0$)} & $\mathbf{98.70}$ & $\mathbf{98.23}$ \\
\hline
\end{tabular}
\label{cp4: tb_result}
\end{center}
\end{table}

% \begin{figure}
%     \centering
% \includegraphics[width=0.8\textwidth]{figure/cp4_img.pdf}
%     \caption{Acc. for our (privatized) multimodal algorithm in the first 30 steps}
%     \label{cp4: fig_summary}
% \end{figure}

\subsection{Privacy Budget}
In this section, we evaluate the performance of our proposed DP-MLD under three different privacy budgets, denoted by $\varepsilon = 0.01, 0.1,$ and $1.0$. Each set of experiments is conducted over 50 epochs, and the corresponding accuracy and loss metrics for both training and testing phases are depicted in \Cref{cp4: fig_epoch}.

For the highest privacy budget of $\varepsilon = 1.0$, our model quickly demonstrates strong performance, achieving an accuracy of $0.956$ by the $10$-th epoch, and it continues to improve, reaching a peak accuracy of $0.987$. This indicates robust model performance with minimal privacy constraints.

Conversely, at a lower privacy budget of $\varepsilon = 0.1$, the model exhibits a slower rate of convergence. By the $10$-th epoch, it achieves an accuracy of only $0.760$. Nevertheless, it eventually attains a comparable peak accuracy of $0.956$, suggesting that whereas initial learning is hampered, the model is capable of recovering to deliver high accuracy as training progresses.

At the most stringent privacy budget of $\varepsilon = 0.01$, the impact on model performance is more pronounced. The accuracy at the $10$-th epoch is just $0.678$, and the model only manages to reach the best accuracy of $0.806$ throughout the 50 epochs.

\begin{figure*}[htbp]
\centerline{\includegraphics[width=1.9\columnwidth]{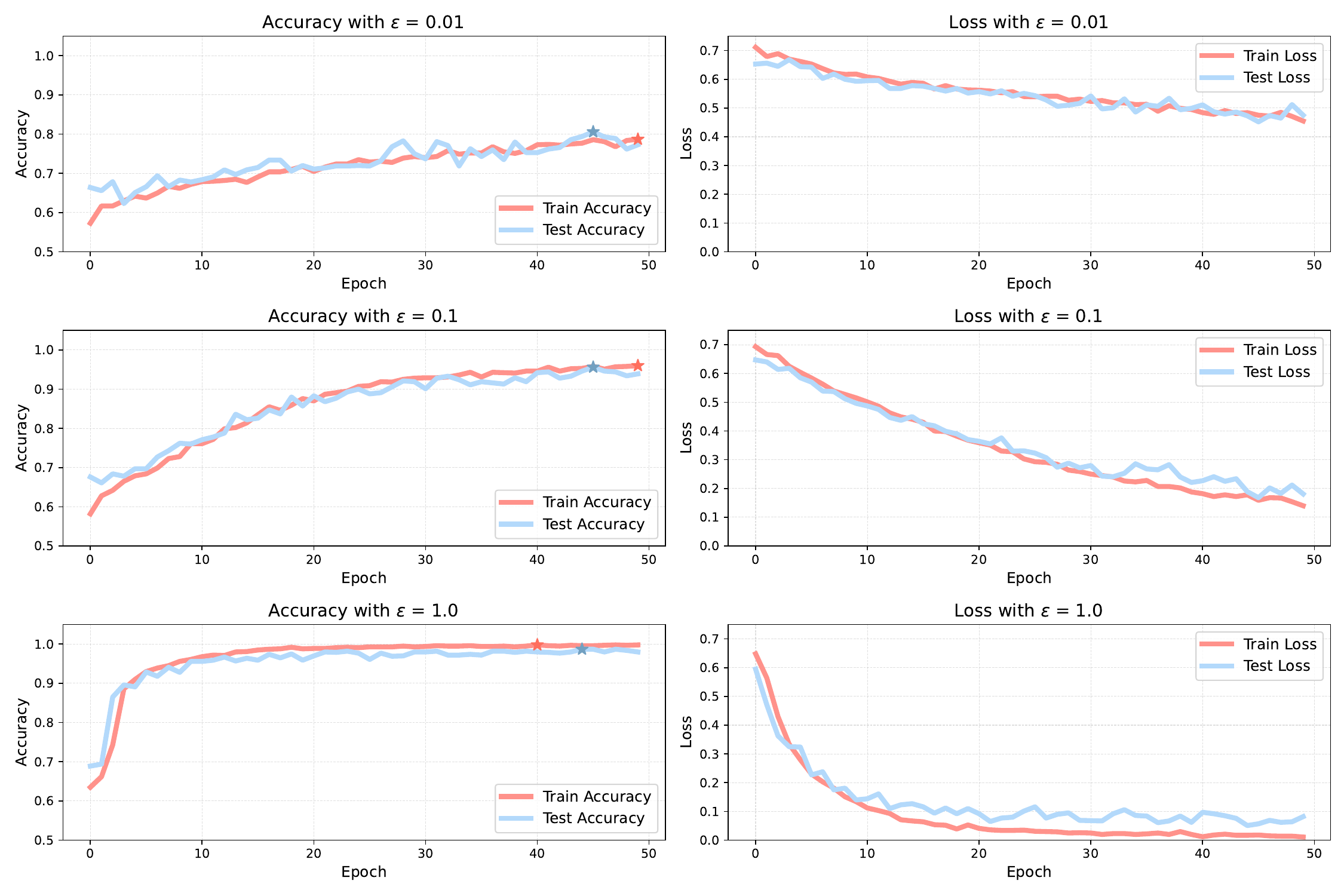}}
\caption{Accuracy and loss in training and testing with different privacy budgets.}
\label{cp4: fig_epoch}
\end{figure*}

These results clearly illustrate the trade-off between performance and privacy protection in machine learning models. As the privacy budget decreases, the ability to learn effectively and to achieve higher accuracy is compromised, highlighting the challenges posed by stringent privacy constraints. A privacy budget of $\varepsilon = 0.1$ represents a balanced choice between performance and privacy protection in our method. A smaller privacy budget indicates more stringent privacy protection. As illustrated in previous work \cite{lyu2020differentially}, a common standard for privacy budget is $\varepsilon = 1.0$. Using our method, excellent results can be achieved at this standard. Moreover, we can even get stronger privacy protection $\varepsilon = 0.1$ with excellent results. This demonstrates that our method outperforms traditional approaches not only in performance, as shown in \Cref{cp4: tb_result}, but also in providing stronger privacy protection with a lower privacy budget, while maintaining excellent performance.

\Cref{cp4: fig_best} demonstrates the best test accuracy achieved for different privacy budgets, highlighting a distinct trend: as the privacy budget increases, so does the test accuracy, albeit up to a certain limit. This trend is particularly noticeable given the restriction on the total number of training epochs, e.g. 10 epochs shown in the figure. The graph effectively illustrates the classic trade-off between performance and privacy. An overall exploration of training and testing accuracy across epochs for different privacy budgets is shown in \Cref{best}.

\begin{figure}[htbp]
\centerline{\includegraphics[width=\columnwidth]{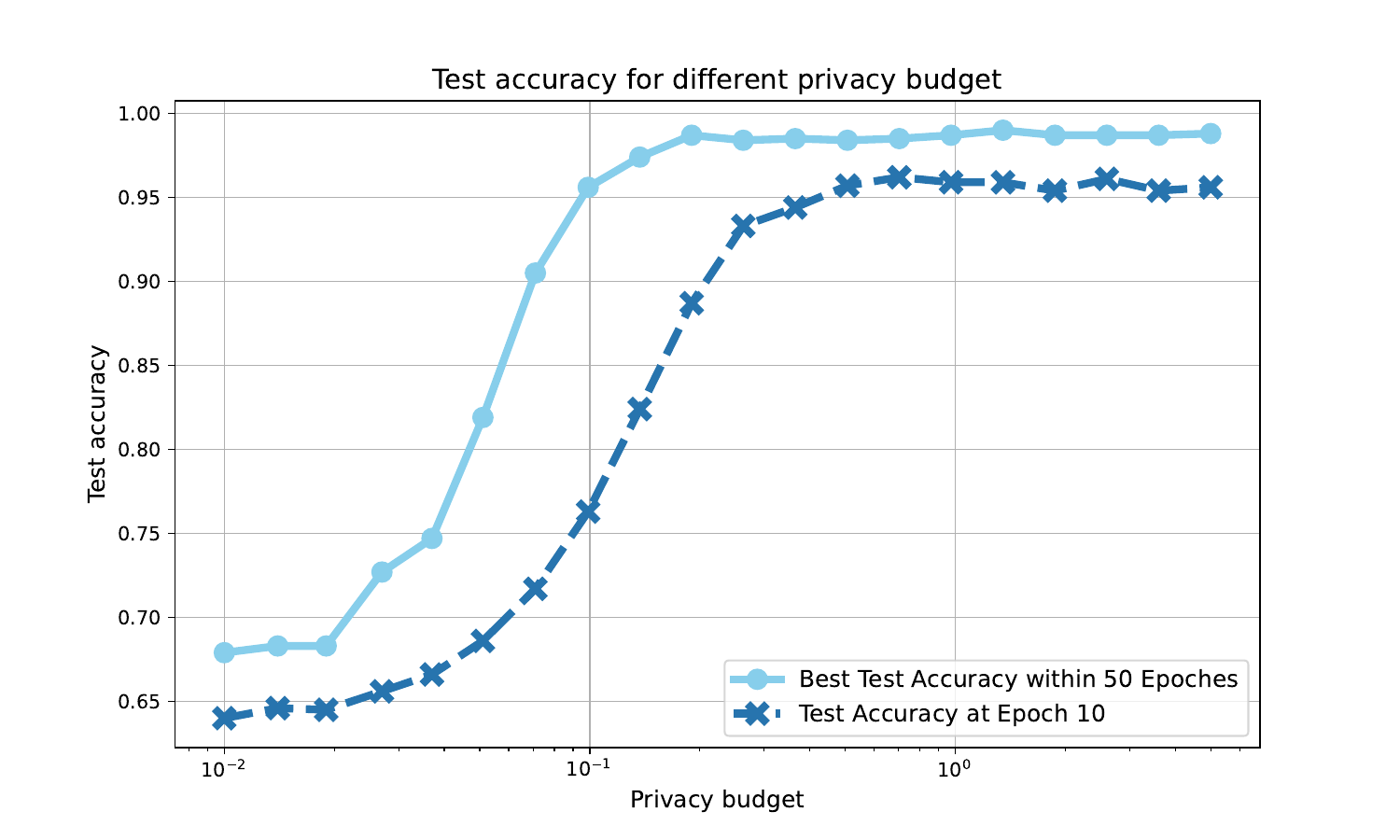}}
\caption{Best accuracy with different privacy budgets in testing.}
\label{cp4: fig_best}
\end{figure}

% \begin{figure}[htbp]
% \centerline{\includegraphics[width=\columnwidth]{figure/cp4_fig4.pdf}}
% \caption{Best train accuracy with different privacy budgets}
% \label{cp4: fig_best_train}
% \end{figure}

\begin{figure*}[!t]
    \centering
    \subfloat[Train accuracy]{\includegraphics[width=\columnwidth]{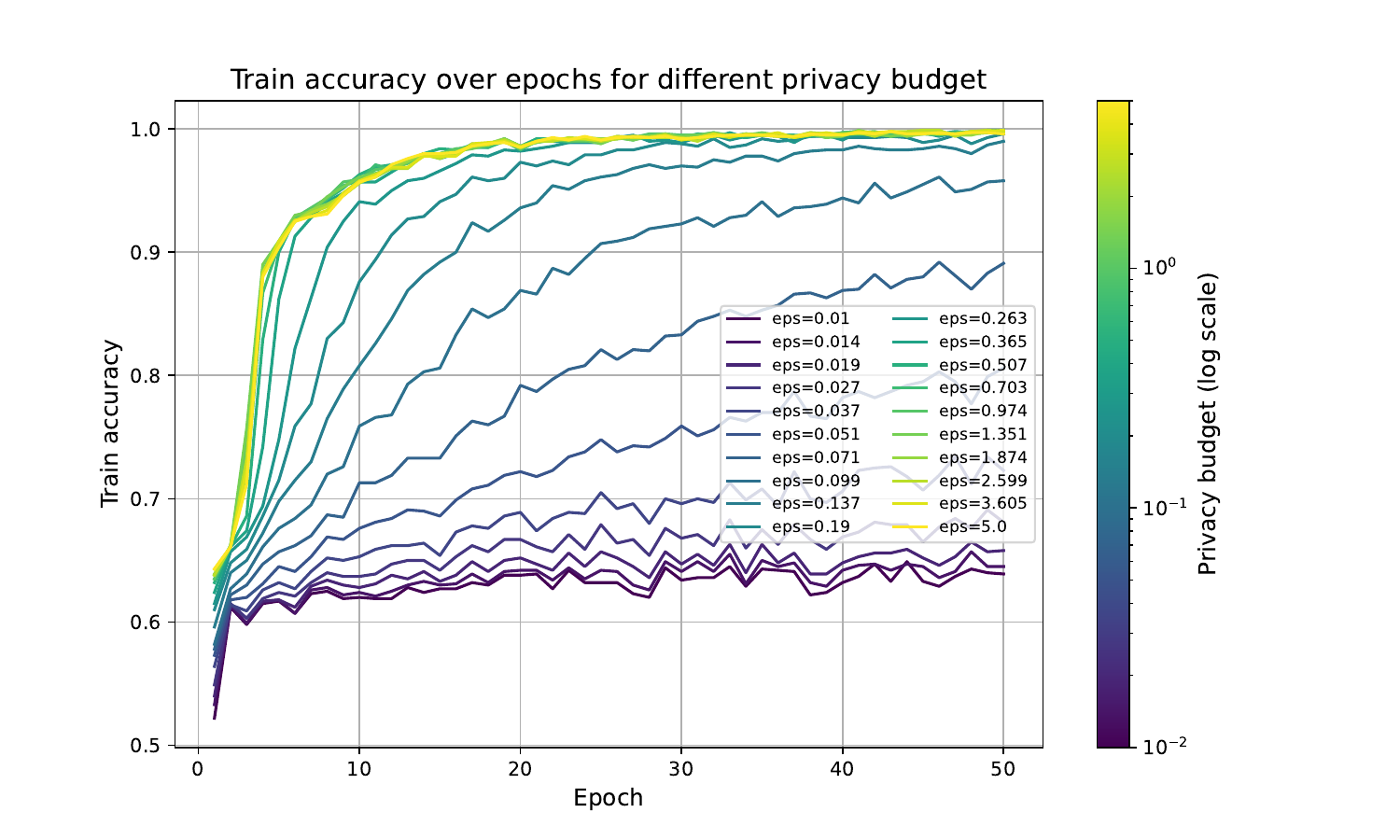}}%
    % \label{cp4: fig_best_train}
    % \hfill
    % \hspace{-2em}
    \subfloat[Test accuracy]{\includegraphics[width=\columnwidth]{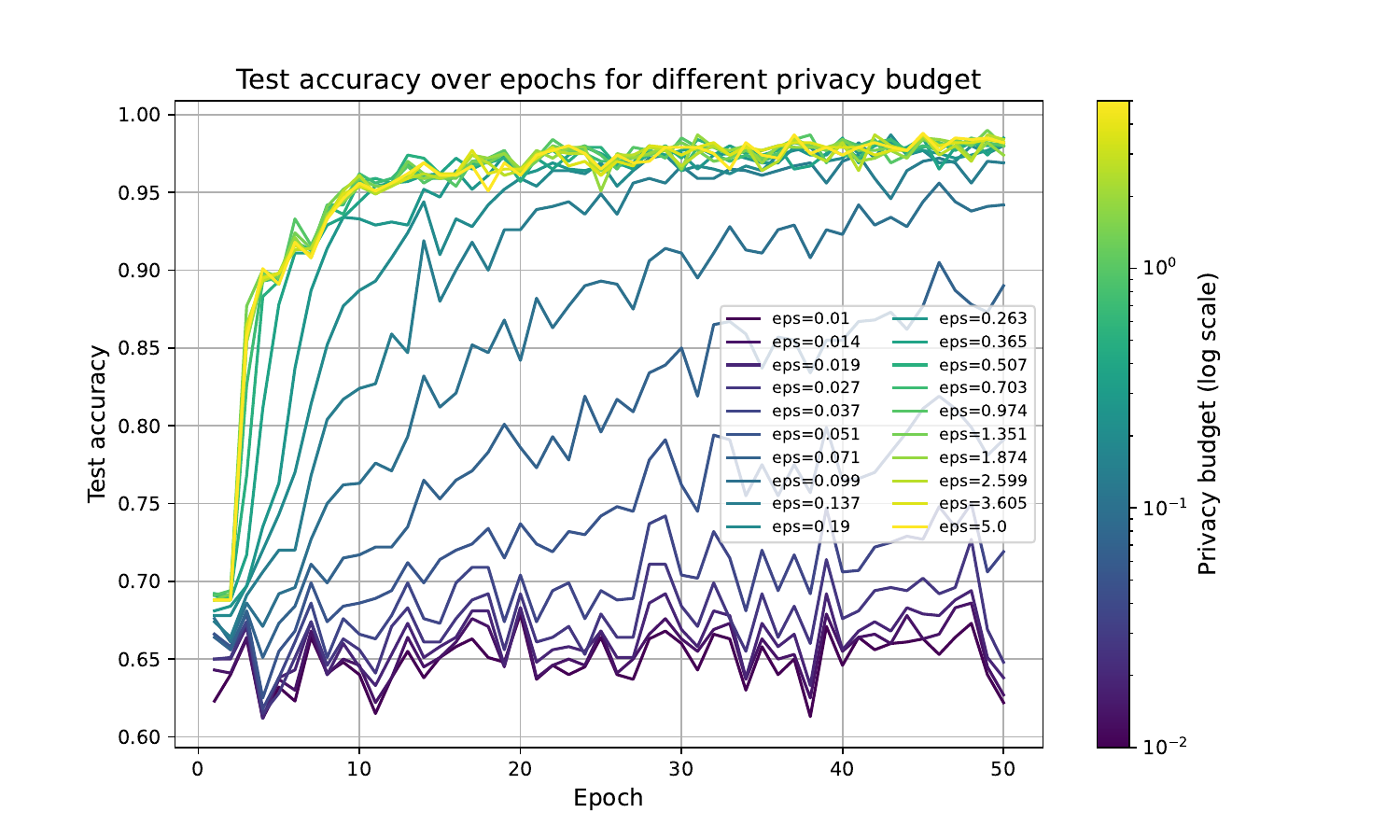}}%
    % \label{cp4: fig_best_val}
    \caption{Accuracy in training and testing across epochs for different privacy budgets.}
    \label{best}
\end{figure*}

% \begin{figure}[htbp]
% \centering
% \includegraphics[scale=0.45]{figure/cp4_fig5.pdf}
% \caption{Best validation accuracy with different privacy budgets}
% \label{cp4: fig_best_val}
% \end{figure}

\subsection{Multimodal Feature Extraction}
\label{cp4: multimodal feature}
In our study, we use the “bert-base-uncased” model \cite{devlin2018bert} for processing EEG data and the “ViT-B/32” variant \cite{dosovitskiy2020image} of Vision Transformer (ViT) model to handle OM data. Throughout the experimental phase, we conduct a comparative analysis of two Bert model configurations \cite{devlin2018bert}: “bert-base-cased” and “bert-base-uncased”. Our findings indicate that the “bert-base-uncased” model excelled, enhancing the final classification performance for FoG by 1.1\%.

Furthermore, we explore various architectures for the image feature extraction model, including “resnet34” \cite{He2015}, and ViT variants of “ViT-B/16”, and “ViT-B/32” \cite{dosovitskiy2020image}. Through rigorous testing, we determine that “ViT-B/32” significantly outperform the others, delivering a 3.2\% improvement in classification accuracy over “resnet34” and a 0.4\% increase over “ViT-B/16”. 
% This selection process for initial weights of the ViT models demonstrates the critical impact of transformer on the performance of multimodal data integration and classification tasks.

As for cross-attention layers, we also test for single stream and double stream, where the single stream is constructed with 12 layers of transformer encoder of Bert, and the double stream is constructed with 3 layers of decoder with cross-attention. The latter outperforms 3.7\% than the previous one. 

\subsection{Randomness Allocation}
\label{cp4: randomness_exp}
In our research, we conduct an experiment of randomness allocation within the context of privacy enhancement on the application of dropout techniques and the integration of Laplacian noise. Our findings shown in \Cref{cp4: fig_weight} reveal distinct preferences in the application of these methods across different types of features. On average, EEG features exhibit an average 59.8\% dropout rate and an average 0.601 Laplacian noise scale, whereas Other Modal (OM) features show an average 62.2\% dropout rate and an average 0.583 Laplacian noise scale. In contrast, cross-modal (CM) features demonstrate a considerably lower preference for dropout with an average of 57.2\% rate and a higher preference for Laplacian noise injection with an average 0.620 noise scale.

In \Cref{cp4: fig_weight}, comparing the last column with the first column, the curve representing the dropout rate appears to be an inverted version of the curve for feature magnitude, corresponding to EEG, OM, and CM, while the curve representing the Laplacian noise scale appears to have a similar trend as the curve for feature magnitude by comparing the last column with the second one. This suggests a relationship between feature magnitude and the preference for randomness allocation via dual mechanism. Specifically, features with larger magnitudes are more likely to enhance privacy by incorporating noise. Conversely, features with smaller magnitudes are more prone to being dropped out directly.

Often larger feature magnitude suggests greater importance. Following this concept, the importance of features across the three modalities can be roughly ranked as CM $>$ EEG $>$ OM, underscoring the performance benefits derived from our cross-modal features. This ranking suggests that our approach leverages cross-modal features effectively, enhancing overall performance. 
% The methodological choice of applying Laplacian noise to important features and dropout to others reflects a nuanced strategy for balancing privacy and performance across different modalities.

% The observation from the Gaussian kernel density estimate (KDE) curves reveals an intriguing phenomenon: the curve representing the dropout rate appears to be an inverted version of the curve for feature magnitude, corresponding to EEG, OM, and CM. This pattern suggests a potential relationship between feature magnitude and the mechanism of dropout in data handling. Specifically, it seems that features with larger magnitudes may be more likely to enhance privacy by incorporating noise. Conversely, features with smaller magnitudes are more prone to being dropped out entirely.

\begin{figure*}[htbp]
\centerline{\includegraphics[width=2\columnwidth]{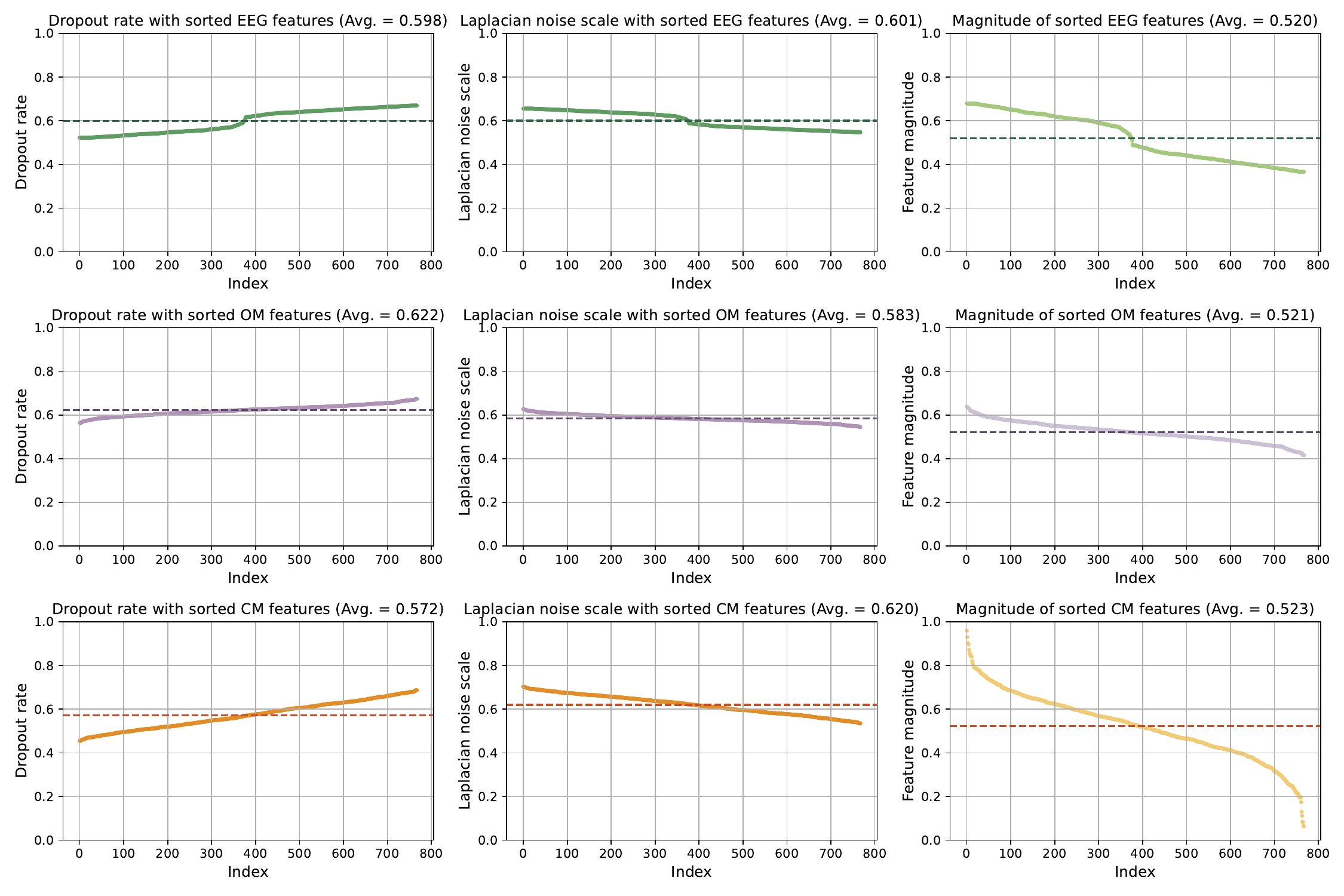}}
\caption{Dropout rate, Laplacian noise scale and feature magnitude of EEG features (first row), OM features (second row), and CM features (third row) features.}
\label{cp4: fig_weight}
\end{figure*}

\section{Conclusions and Discussion}
\label{Conclusions}
To conclude, we develop a novel Differentially Private Multimodal Laplacian Dropout (DP-MLD) for EEG representative learning. In the design of multimodal feature extraction, we treat EEG data similarly to text in language models and interpret physiological signals using techniques akin to Vision Transformers. We propose a cross-attention mechanism to extract cross-modal features. To ensure privacy at the feature level, we employ a dual-effect approach with 
dropout and Laplacian noise addition. This innovative strategy effectively balances privacy and analytical performance by strategically optimizing the allocation of randomness between dropout and noise injection. 

In experiment, We apply DP-MLD for disease diagnosis, specifically focusing on the detection of Freezing of Gait in Parkinson’s disease on an open-source FoG dataset \cite{zhang2022multimodal}. The experimental results demonstrate the outstanding performance of our method. DL-MLD shows an approximate 4\% improvement with 98.7\% accuracy with a common standard privacy budget $\varepsilon = 1.0$. DL-MLD can even still achieves on accuracy of 95.6\% with a stricter privacy budget with $\varepsilon = 0.1$. The advantage of our multimodal EEG representative learning scheme is also validated through a “non-private” variant of DP-MLD, which shows an excellent performance of 99.30\% accuracy. Our experiments also demonstrate a classic trade-off between performance and privacy. With larger privacy budgets, the performance is better with higher accuracy and convergence rate under the compromise of privacy guarantee, and vice versa. Besides, experiments demonstrate superior performance with Bert with initial weights of “bert-base-uncased” model \cite{devlin2018bert} for processing EEG data, Vision Transformer (ViT) model with initial weights “ViT-B/32” \cite{dosovitskiy2020image} to handle OM data, and 3-layer decoders for EEG, OM, and cross-modal feature extraction, respectively. Furthermore, in the experiments regarding randomness allocation of the dual scheme, we observe that features with larger magnitude have a stronger preference for Laplacian noise injection rather than dropout. Also the advantage of our extracted CM features is demonstrated. 

Looking ahead, we may generalize DP-MLD to other multimodal frameworks, additional tasks, and datasets. We may also generalize it with minor changes, such as other privacy schemes including moment accounting \cite{abadi2016deep} and local differential privacy \cite{asi2022optimal}. These future work may broaden the applicability of DP-MLD to other clinical data analysis and beyond, potentially extending into other fields.

% \newpage
% \section*{REFERENCES}
% \bibliographystyle{IEEEtran}
% \bibliography{IEEEabrv,ref}%

\begin{thebibliography}{99}
\leftskip1pc
\bibitem{roy2019deep}
Y.~Roy, H.~Banville, I.~Albuquerque, A.~Gramfort, T.~H. Falk, and J.~Faubert, ``Deep learning-based electroencephalography analysis: a systematic review,'' \emph{Journal of neural engineering}, vol.~16, no.~5, p. 051001, 2019.

\bibitem{biasiucci2019electroencephalography}
A.~Biasiucci, B.~Franceschiello, and M.~M. Murray, ``Electroencephalography,'' \emph{Current Biology}, vol.~29, no.~3, pp. R80--R85, 2019.

\bibitem{aboalayon2016sleep}
K.~A.~I. Aboalayon, M.~Faezipour, W.~S. Almuhammadi, and S.~Moslehpour, ``Sleep stage classification using eeg signal analysis: a comprehensive survey and new investigation,'' \emph{Entropy}, vol.~18, no.~9, p. 272, 2016.

\bibitem{acharya2013automated}
U.~R. Acharya, S.~V. Sree, G.~Swapna, R.~J. Martis, and J.~S. Suri, ``Automated eeg analysis of epilepsy: a review,'' \emph{Knowledge-Based Systems}, vol.~45, pp. 147--165, 2013.

\bibitem{arns2013decade}
M.~Arns, C.~K. Conners, and H.~C. Kraemer, ``A decade of eeg theta/beta ratio research in adhd: a meta-analysis,'' \emph{Journal of attention disorders}, vol.~17, no.~5, pp. 374--383, 2013.

\bibitem{maitin2022survey}
A.~M. Maitin, J.~P. Romero~Mu{\~n}oz, and {\'A}.~J. Garc{\'\i}a-Tejedor, ``Survey of machine learning techniques in the analysis of eeg signals for parkinson’s disease: A systematic review,'' \emph{Applied Sciences}, vol.~12, no.~14, p. 6967, 2022.

\bibitem{bigdely2015prep}
N.~Bigdely-Shamlo, T.~Mullen, C.~Kothe, K.-M. Su, and K.~A. Robbins, ``The prep pipeline: standardized preprocessing for large-scale eeg analysis,'' \emph{Frontiers in neuroinformatics}, vol.~9, p.~16, 2015.

\bibitem{jas2017autoreject}
M.~Jas, D.~A. Engemann, Y.~Bekhti, F.~Raimondo, and A.~Gramfort, ``Autoreject: Automated artifact rejection for meg and eeg data,'' \emph{NeuroImage}, vol. 159, pp. 417--429, 2017.

\bibitem{cole2019cycle}
S.~Cole and B.~Voytek, ``Cycle-by-cycle analysis of neural oscillations,'' \emph{Journal of neurophysiology}, vol. 122, no.~2, pp. 849--861, 2019.

\bibitem{gramfort2013time}
A.~Gramfort, D.~Strohmeier, J.~Haueisen, M.~S. H{\"a}m{\"a}l{\"a}inen, and M.~Kowalski, ``Time-frequency mixed-norm estimates: Sparse m/eeg imaging with non-stationary source activations,'' \emph{NeuroImage}, vol.~70, pp. 410--422, 2013.

\bibitem{giannakos2019multimodal}
M.~N. Giannakos, K.~Sharma, I.~O. Pappas, V.~Kostakos, and E.~Velloso, ``Multimodal data as a means to understand the learning experience,'' \emph{International Journal of Information Management}, vol.~48, pp. 108--119, 2019.

\bibitem{skaramagkas2023multi}
V.~Skaramagkas, A.~Pentari, Z.~Kefalopoulou, and M.~Tsiknakis, ``Multi-modal deep learning diagnosis of parkinson’s disease—a systematic review,'' \emph{IEEE Transactions on Neural Systems and Rehabilitation Engineering}, vol.~31, pp. 2399--2423, 2023.

\bibitem{baig2019survey}
M.~Z. Baig and M.~Kavakli, ``A survey on psycho-physiological analysis \& measurement methods in multimodal systems,'' \emph{Multimodal Technologies and Interaction}, vol.~3, no.~2, p.~37, 2019.

\bibitem{miotto2018deep}
R.~Miotto, F.~Wang, S.~Wang, X.~Jiang, and J.~T. Dudley, ``Deep learning for healthcare: review, opportunities and challenges,'' \emph{Briefings in bioinformatics}, vol.~19, no.~6, pp. 1236--1246, 2018.

\bibitem{dwork2006calibrating}
C.~Dwork, F.~McSherry, K.~Nissim, and A.~Smith, ``Calibrating noise to sensitivity in private data analysis,'' in \emph{Theory of cryptography conference}.\hskip 1em plus 0.5em minus 0.4em\relax Springer, 2006, pp. 265--284.

\bibitem{abadi2016deep}
M.~Abadi, A.~Chu, I.~Goodfellow, H.~B. McMahan, I.~Mironov, K.~Talwar, and L.~Zhang, ``Deep learning with differential privacy,'' in \emph{Proceedings of the 2016 ACM SIGSAC conference on computer and communications security}, 2016, pp. 308--318.

\bibitem{lyu2020differentially}
L.~Lyu, X.~He, and Y.~Li, ``Differentially private representation for nlp: Formal guarantee and an empirical study on privacy and fairness,'' \emph{arXiv preprint arXiv:2010.01285}, 2020.

\bibitem{bassily2014private}
R.~Bassily, A.~Smith, and A.~Thakurta, ``Private empirical risk minimization: Efficient algorithms and tight error bounds,'' in \emph{2014 IEEE 55th annual symposium on foundations of computer science}.\hskip 1em plus 0.5em minus 0.4em\relax IEEE, 2014, pp. 464--473.

\bibitem{yin2024primonitor}
L.~Yin, S.~Lin, Z.~Sun, S.~Wang, R.~Li, and Y.~He, ``Primonitor: An adaptive tuning privacy-preserving approach for multimodal emotion detection,'' \emph{World Wide Web}, vol.~27, no.~2, p.~9, 2024.

\bibitem{cai2022multimodal}
C.~Cai, Y.~Sang, and H.~Tian, ``A multimodal differential privacy framework based on fusion representation learning,'' \emph{Connection Science}, vol.~34, no.~1, pp. 2219--2239, 2022.

\bibitem{liu2020multimodal}
T.~Liu, P.~P. Liang, M.~Muszynski, R.~Ishii, D.~Brent, R.~Auerbach, N.~Allen, and L.-P. Morency, ``Multimodal privacy-preserving mood prediction from mobile data: A preliminary study,'' \emph{arXiv preprint arXiv:2012.02359}, 2020.

\bibitem{hou2019weighted}
S.~Hou and Z.~Wang, ``Weighted channel dropout for regularization of deep convolutional neural network,'' in \emph{Proceedings of the AAAI conference on artificial intelligence}, vol.~33, no.~01, 2019, pp. 8425--8432.

\bibitem{sanjar2020weight}
K.~Sanjar, A.~Rehman, A.~Paul, and K.~JeongHong, ``Weight dropout for preventing neural networks from overfitting,'' in \emph{2020 8th International conference on orange technology (ICOT)}.\hskip 1em plus 0.5em minus 0.4em\relax IEEE, 2020, pp. 1--4.

\bibitem{herrmann2020channel}
C.~Herrmann, R.~S. Bowen, and R.~Zabih, ``Channel selection using gumbel softmax,'' in \emph{European conference on computer vision}.\hskip 1em plus 0.5em minus 0.4em\relax Springer, 2020, pp. 241--257.

\bibitem{biswas2022privacy}
C.~Biswas, D.~Ganguly, P.~S. Mukherjee, U.~Bhattacharya, and Y.~Hou, ``Privacy-aware supervised classification: An informative subspace based multi-objective approach,'' \emph{Pattern Recognition}, vol. 122, p. 108301, 2022.

\bibitem{jang2016categorical}
E.~Jang, S.~Gu, and B.~Poole, ``Categorical reparameterization with gumbel-softmax,'' \emph{arXiv preprint arXiv:1611.01144}, 2016.

\bibitem{ma2014prevalence}
C.-l. Ma, L.~Su, J.-j. Xie, J.-x. Long, P.~Wu, and L.~Gu, ``The prevalence and incidence of parkinson’s disease in china: a systematic review and meta-analysis,'' \emph{Journal of neural transmission}, vol. 121, pp. 123--134, 2014.

\bibitem{national2004parkinson}
N.~I. of~Neurological~Disorders and S.~(US), \emph{Parkinson's disease: Challenges, progress, and promise}.\hskip 1em plus 0.5em minus 0.4em\relax National Institute of Neurological Disorders and Stroke, National Institutes~…, 2004.

\bibitem{tan2011freezing}
D.~M. Tan, J.~L. McGinley, M.~E. Danoudis, R.~Iansek, and M.~E. Morris, ``Freezing of gait and activity limitations in people with parkinson's disease,'' \emph{Archives of physical medicine and rehabilitation}, vol.~92, no.~7, pp. 1159--1165, 2011.

\bibitem{nutt2011freezing}
J.~G. Nutt, B.~R. Bloem, N.~Giladi, M.~Hallett, F.~B. Horak, and A.~Nieuwboer, ``Freezing of gait: moving forward on a mysterious clinical phenomenon,'' \emph{The Lancet Neurology}, vol.~10, no.~8, pp. 734--744, 2011.

\bibitem{walton2015major}
C.~C. Walton, J.~M. Shine, J.~M. Hall, C.~O’Callaghan, L.~Mowszowski, M.~Gilat, J.~Y. Szeto, S.~L. Naismith, and S.~J. Lewis, ``The major impact of freezing of gait on quality of life in parkinson’s disease,'' \emph{Journal of neurology}, vol. 262, pp. 108--115, 2015.

\bibitem{wang2020characterization}
Q.~Wang, L.~Meng, J.~Pang, X.~Zhu, and D.~Ming, ``Characterization of eeg data revealing relationships with cognitive and motor symptoms in parkinson's disease: a systematic review,'' \emph{Frontiers in aging neuroscience}, vol.~12, p. 587396, 2020.

\bibitem{zhang2022multimodal}
W.~Zhang, Z.~Yang, H.~Li, D.~Huang, L.~Wang, Y.~Wei, L.~Zhang, L.~Ma, H.~Feng, J.~Pan \emph{et~al.}, ``Multimodal data for the detection of freezing of gait in parkinson’s disease,'' \emph{Scientific data}, vol.~9, no.~1, pp. 1--10, 2022.

\bibitem{cai2020feature}
H.~Cai, Z.~Qu, Z.~Li, Y.~Zhang, X.~Hu, and B.~Hu, ``Feature-level fusion approaches based on multimodal eeg data for depression recognition,'' \emph{Information Fusion}, vol.~59, pp. 127--138, 2020.

\bibitem{zheng2014multimodal}
W.-L. Zheng, B.-N. Dong, and B.-L. Lu, ``Multimodal emotion recognition using eeg and eye tracking data,'' in \emph{2014 36th annual international conference of the IEEE engineering in medicine and biology society}.\hskip 1em plus 0.5em minus 0.4em\relax IEEE, 2014, pp. 5040--5043.

\bibitem{zheng2018emotionmeter}
W.-L. Zheng, W.~Liu, Y.~Lu, B.-L. Lu, and A.~Cichocki, ``Emotionmeter: A multimodal framework for recognizing human emotions,'' \emph{IEEE transactions on cybernetics}, vol.~49, no.~3, pp. 1110--1122, 2018.

\bibitem{guo2019multimodal}
J.-J. Guo, R.~Zhou, L.-M. Zhao, and B.-L. Lu, ``Multimodal emotion recognition from eye image, eye movement and eeg using deep neural networks,'' in \emph{2019 41st annual international conference of the IEEE engineering in medicine and biology society (EMBC)}.\hskip 1em plus 0.5em minus 0.4em\relax IEEE, 2019, pp. 3071--3074.

\bibitem{lu2015combining}
Y.~Lu, W.-L. Zheng, B.~Li, and B.-L. Lu, ``Combining eye movements and eeg to enhance emotion recognition.'' in \emph{IJCAI}, vol.~15.\hskip 1em plus 0.5em minus 0.4em\relax Buenos Aires, 2015, pp. 1170--1176.

\bibitem{zhao2021expression}
Y.~Zhao and D.~Chen, ``Expression eeg multimodal emotion recognition method based on the bidirectional lstm and attention mechanism,'' \emph{Computational and Mathematical Methods in Medicine}, vol. 2021, no.~1, p. 9967592, 2021.

\bibitem{duan2015design}
F.~Duan, D.~Lin, W.~Li, and Z.~Zhang, ``Design of a multimodal eeg-based hybrid bci system with visual servo module,'' \emph{IEEE Transactions on Autonomous Mental Development}, vol.~7, no.~4, pp. 332--341, 2015.

\bibitem{leeb2010multimodal}
R.~Leeb, H.~Sagha, R.~Chavarriaga \emph{et~al.}, ``Multimodal fusion of muscle and brain signals for a hybrid-bci,'' in \emph{2010 Annual International Conference of the IEEE Engineering in Medicine and Biology}.\hskip 1em plus 0.5em minus 0.4em\relax IEEE, 2010, pp. 4343--4346.

\bibitem{ji2016eeg}
H.~Ji, J.~Li, R.~Lu, R.~Gu, L.~Cao, and X.~Gong, ``Eeg classification for hybrid brain-computer interface using a tensor based multiclass multimodal analysis scheme,'' \emph{Computational intelligence and neuroscience}, vol. 2016, no.~1, p. 1732836, 2016.

\bibitem{chen2022multimodal}
J.~Chen, Y.~Liu, W.~Xue, K.~Hu, and W.~Lin, ``Multimodal eeg emotion recognition based on the attention recurrent graph convolutional network,'' \emph{information}, vol.~13, no.~11, p. 550, 2022.

\bibitem{chang2021multimodal}
C.~Chang and J.~E. Chen, ``Multimodal eeg-fmri: advancing insight into large-scale human brain dynamics,'' \emph{Current opinion in biomedical engineering}, vol.~18, p. 100279, 2021.

\bibitem{gotman2006combining}
J.~Gotman, E.~Kobayashi, A.~P. Bagshaw, C.-G. B{\'e}nar, and F.~Dubeau, ``Combining eeg and fmri: a multimodal tool for epilepsy research,'' \emph{Journal of Magnetic Resonance Imaging: An Official Journal of the International Society for Magnetic Resonance in Medicine}, vol.~23, no.~6, pp. 906--920, 2006.

\bibitem{memarian2015multimodal}
N.~Memarian, S.~Kim, S.~Dewar, J.~Engel~Jr, and R.~J. Staba, ``Multimodal data and machine learning for surgery outcome prediction in complicated cases of mesial temporal lobe epilepsy,'' \emph{Computers in biology and medicine}, vol.~64, pp. 67--78, 2015.

\bibitem{storti2012multimodal}
S.~F. Storti, E.~Formaggio, E.~Franchini, L.~G. Bongiovanni, R.~Cerini, A.~Fiaschi, C.~M. Michel, and P.~Manganotti, ``A multimodal imaging approach to the evaluation of post-traumatic epilepsy,'' \emph{Magnetic Resonance Materials in Physics, Biology and Medicine}, vol.~25, pp. 345--360, 2012.

\bibitem{polikar2010multimodal}
R.~Polikar, C.~Tilley, B.~Hillis, and C.~M. Clark, ``Multimodal eeg, mri and pet data fusion for alzheimer's disease diagnosis,'' in \emph{2010 Annual international conference of the IEEE engineering in medicine and biology}.\hskip 1em plus 0.5em minus 0.4em\relax IEEE, 2010, pp. 6058--6061.

\bibitem{liu2013multiple}
F.~Liu, L.~Zhou, C.~Shen, and J.~Yin, ``Multiple kernel learning in the primal for multimodal alzheimer’s disease classification,'' \emph{IEEE journal of biomedical and health informatics}, vol.~18, no.~3, pp. 984--990, 2013.

\bibitem{jesus2021multimodal}
B.~Jesus~Jr, R.~Cassani, W.~J. McGeown, M.~Cecchi, K.~Fadem, and T.~H. Falk, ``Multimodal prediction of alzheimer's disease severity level based on resting-state eeg and structural mri,'' \emph{Frontiers in human neuroscience}, vol.~15, p. 700627, 2021.

\bibitem{zhang2021alzheimer}
Y.~Zhang, S.~Wang, K.~Xia, Y.~Jiang, P.~Qian, A.~D.~N. Initiative \emph{et~al.}, ``Alzheimer’s disease multiclass diagnosis via multimodal neuroimaging embedding feature selection and fusion,'' \emph{Information Fusion}, vol.~66, pp. 170--183, 2021.

\bibitem{colloby2016multimodal}
S.~J. Colloby, R.~A. Cromarty, L.~R. Peraza, K.~Johnsen, G.~J{\'o}hannesson, L.~Bonanni, M.~Onofrj, R.~Barber, J.~T. O'Brien, and J.-P. Taylor, ``Multimodal eeg-mri in the differential diagnosis of alzheimer's disease and dementia with lewy bodies,'' \emph{Journal of psychiatric research}, vol.~78, pp. 48--55, 2016.

\bibitem{vandecasteele2021power}
K.~Vandecasteele, T.~De~Cooman, C.~Chatzichristos, E.~Cleeren, L.~Swinnen, J.~Macea~Ortiz, S.~Van~Huffel, M.~D{\"u}mpelmann, A.~Schulze-Bonhage, M.~De~Vos \emph{et~al.}, ``The power of ecg in multimodal patient-specific seizure monitoring: added value to an eeg-based detector using limited channels,'' \emph{Epilepsia}, vol.~62, no.~10, pp. 2333--2343, 2021.

\bibitem{furbass2017automatic}
F.~F{\"u}rbass, S.~Kampusch, E.~Kaniusas, J.~Koren, S.~Pirker, R.~Hopfeng{\"a}rtner, H.~Stefan, T.~Kluge, and C.~Baumgartner, ``Automatic multimodal detection for long-term seizure documentation in epilepsy,'' \emph{Clinical Neurophysiology}, vol. 128, no.~8, pp. 1466--1472, 2017.

\bibitem{sirpal2019fnirs}
P.~Sirpal, A.~Kassab, P.~Pouliot, D.~K. Nguyen, and F.~Lesage, ``fnirs improves seizure detection in multimodal eeg-fnirs recordings,'' \emph{Journal of Biomedical Optics}, vol.~24, no.~5, pp. 051\,408--051\,408, 2019.

\bibitem{goel2023ensemble}
K.~Goel, N.~Sood, and I.~Saini, ``Ensemble technique based parkison’s disease detection from fog and eeg signals,'' in \emph{2023 World Conference on Communication \& Computing (WCONF)}.\hskip 1em plus 0.5em minus 0.4em\relax IEEE, 2023, pp. 1--5.

\bibitem{guo2022high}
Y.~Guo, D.~Huang, W.~Zhang, L.~Wang, Y.~Li, G.~Olmo, Q.~Wang, F.~Meng, and P.~Chan, ``High-accuracy wearable detection of freezing of gait in parkinson's disease based on pseudo-multimodal features,'' \emph{Computers in Biology and Medicine}, vol. 146, p. 105629, 2022.

\bibitem{mesin2022multi}
L.~Mesin, P.~Porcu, D.~Russu, G.~Farina, L.~Borz{\`\i}, W.~Zhang, Y.~Guo, and G.~Olmo, ``A multi-modal analysis of the freezing of gait phenomenon in parkinson’s disease,'' \emph{Sensors}, vol.~22, no.~7, p. 2613, 2022.

\bibitem{hou2023flexible}
Y.~Hou, J.~Ji, Y.~Zhu, T.~Dell, and X.~Liu, ``Flexible gel-free multi-modal wireless sensors with edge deep learning for detecting and alerting freezing of gait symptom,'' \emph{IEEE Transactions on Biomedical Circuits and Systems}, 2023.

\bibitem{bajpai2023multimodal}
R.~Bajpai, S.~Khare, and D.~Joshi, ``A multimodal model-fusion approach for improved prediction of freezing of gait in parkinson’s disease,'' \emph{IEEE Sensors Journal}, vol.~23, no.~14, pp. 16\,168--16\,175, 2023.

\bibitem{li2010optimizing}
C.~Li, M.~Hay, V.~Rastogi, G.~Miklau, and A.~McGregor, ``Optimizing linear counting queries under differential privacy,'' in \emph{Proceedings of the twenty-ninth ACM SIGMOD-SIGACT-SIGART symposium on Principles of database systems}, 2010, pp. 123--134.

\bibitem{xu2013differentially}
J.~Xu, Z.~Zhang, X.~Xiao, Y.~Yang, G.~Yu, and M.~Winslett, ``Differentially private histogram publication,'' \emph{The VLDB journal}, vol.~22, pp. 797--822, 2013.

\bibitem{asi2022optimal}
H.~Asi, V.~Feldman, and K.~Talwar, ``Optimal algorithms for mean estimation under local differential privacy,'' in \emph{International Conference on Machine Learning}.\hskip 1em plus 0.5em minus 0.4em\relax PMLR, 2022, pp. 1046--1056.

\bibitem{dwork2006our}
C.~Dwork, K.~Kenthapadi, F.~McSherry, I.~Mironov, and M.~Naor, ``Our data, ourselves: Privacy via distributed noise generation,'' in \emph{Advances in Cryptology-EUROCRYPT 2006: 24th Annual International Conference on the Theory and Applications of Cryptographic Techniques, St. Petersburg, Russia, May 28-June 1, 2006. Proceedings 25}.\hskip 1em plus 0.5em minus 0.4em\relax Springer, 2006, pp. 486--503.

\bibitem{song2013stochastic}
S.~Song, K.~Chaudhuri, and A.~D. Sarwate, ``Stochastic gradient descent with differentially private updates,'' in \emph{2013 IEEE global conference on signal and information processing}.\hskip 1em plus 0.5em minus 0.4em\relax IEEE, 2013, pp. 245--248.

\bibitem{kairouz2021practical}
P.~Kairouz, B.~McMahan, S.~Song, O.~Thakkar, A.~Thakurta, and Z.~Xu, ``Practical and private (deep) learning without sampling or shuffling,'' in \emph{International Conference on Machine Learning}.\hskip 1em plus 0.5em minus 0.4em\relax PMLR, 2021, pp. 5213--5225.

\bibitem{choquette2022multi}
C.~A. Choquette-Choo, H.~B. McMahan, K.~Rush, and A.~Thakurta, ``Multi-epoch matrix factorization mechanisms for private machine learning,'' \emph{arXiv preprint arXiv:2211.06530}, 2022.

\bibitem{huang2023safeguarding}
A.~Huang, P.~Liu, R.~Nakada, L.~Zhang, and W.~Zhang, ``Safeguarding data in multimodal ai: A differentially private approach to clip training,'' \emph{arXiv preprint arXiv:2306.08173}, 2023.

\bibitem{sander2024differentially}
T.~Sander, Y.~Yu, M.~Sanjabi, A.~Durmus, Y.~Ma, K.~Chaudhuri, and C.~Guo, ``Differentially private representation learning via image captioning,'' \emph{arXiv preprint arXiv:2403.02506}, 2024.

\bibitem{benouis2023privacy}
M.~Benouis, Y.~S. Can, and E.~Andr{\'e}, ``A privacy-preserving multi-task learning framework for emotion and identity recognition from multimodal physiological signals,'' in \emph{2023 11th International Conference on Affective Computing and Intelligent Interaction Workshops and Demos (ACIIW)}.\hskip 1em plus 0.5em minus 0.4em\relax IEEE, 2023, pp. 1--8.

\bibitem{alshareef2023differential}
M.~S. Alshareef, M.~Jaber, and A.~M. Abdelmoniem, ``A differential privacy approach for privacy-preserving multi-modal stress detection,'' in \emph{2023 IEEE 28th International Workshop on Computer Aided Modeling and Design of Communication Links and Networks (CAMAD)}.\hskip 1em plus 0.5em minus 0.4em\relax IEEE, 2023, pp. 206--212.

\bibitem{shokri2015privacy}
R.~Shokri and V.~Shmatikov, ``Privacy-preserving deep learning,'' in \emph{Proceedings of the 22nd ACM SIGSAC conference on computer and communications security}, 2015, pp. 1310--1321.

\bibitem{wang2023differentially}
Y.~Wang, X.~Meng, and X.~Liu, ``Differentially private recurrent variational autoencoder for text privacy preservation,'' \emph{Mobile Networks and Applications}, pp. 1--16, 2023.

\bibitem{maheshwari2022fair}
G.~Maheshwari, P.~Denis, M.~Keller, and A.~Bellet, ``Fair nlp models with differentially private text encoders,'' \emph{arXiv preprint arXiv:2205.06135}, 2022.

\bibitem{zhou2023textobfuscator}
X.~Zhou, Y.~Lu, R.~Ma, T.~Gui, Y.~Wang, Y.~Ding, Y.~Zhang, Q.~Zhang, and X.-J. Huang, ``Textobfuscator: Making pre-trained language model a privacy protector via obfuscating word representations,'' in \emph{Findings of the Association for Computational Linguistics: ACL 2023}, 2023, pp. 5459--5473.

\bibitem{gumbel1954statistical}
E.~J. Gumbel, \emph{Statistical theory of extreme values and some practical applications: a series of lectures}.\hskip 1em plus 0.5em minus 0.4em\relax US Government Printing Office, 1954, vol.~33.

\bibitem{devlin2018bert}
J.~Devlin, M.-W. Chang, K.~Lee, and K.~Toutanova, ``Bert: Pre-training of deep bidirectional transformers for language understanding,'' \emph{arXiv preprint arXiv:1810.04805}, 2018.

\bibitem{wang2023large}
B.~Wang, X.~Fu, Y.~Lan, L.~Zhang, and Y.~Xiang, ``Large transformers are better eeg learners,'' \emph{arXiv preprint arXiv:2308.11654}, 2023.

\bibitem{dosovitskiy2020image}
A.~Dosovitskiy, L.~Beyer, A.~Kolesnikov, D.~Weissenborn, X.~Zhai, T.~Unterthiner, M.~Dehghani, M.~Minderer, G.~Heigold, S.~Gelly \emph{et~al.}, ``An image is worth 16x16 words: Transformers for image recognition at scale,'' \emph{arXiv preprint arXiv:2010.11929}, 2020.

\bibitem{vaswani2017attention}
A.~Vaswani, N.~Shazeer, N.~Parmar, J.~Uszkoreit, L.~Jones, A.~N. Gomez, {\L}.~Kaiser, and I.~Polosukhin, ``Attention is all you need,'' \emph{Advances in neural information processing systems}, vol.~30, 2017.

\bibitem{He2015}
K.~He, X.~Zhang, S.~Ren, and J.~Sun, ``Deep residual learning for image recognition,'' \emph{arXiv preprint arXiv:1512.03385}, 2015.

\end{thebibliography}

\end{document}